\newcommand{\ii}{\ensuremath{\mathrm{i}}}
\renewcommand{\exp}{\ensuremath{e}}
\newcommand{\Reals}{\ensuremath{\mathbb{R}}}
\newcommand{\Integers}{\ensuremath{\mathbb{Z}}}
\newcommand{\Complex}{\ensuremath{\mathbb{C}}}
\newcommand{\Naturals}{\ensuremath{\mathbb{N}}}
\newcommand{\Walk}{\ensuremath{{\bf W}}}
\newcommand{\Coin}{\ensuremath{{\bf C}}}
\newcommand{\idty}{\ensuremath{\mathbbm{1}}}
\newcommand{\ind}{\ensuremath{\mathrm{ind}}}
\newcommand{\Ket}[1]{\ensuremath{\vert #1\rangle}}
\newcommand{\Ketbra}[2]{\ensuremath{\vert #1\rangle\langle#2\vert}}
\newcommand{\Scp}[2]{\ensuremath{\langle #1\vert #2\rangle}}
\newcommand{\ScpWOp}[3]{\ensuremath{\langle #1\vert #2\vert #3\rangle}}
\newcommand{\Norm}[1]{\ensuremath{\Vert #1\Vert}}
\newcommand{\tr}{\ensuremath{\mathrm{tr}}}
\newcommand{\WSp}{\ensuremath{\ell^2(\Integers^s)\otimes \KK}}
\newcommand{\HH}{\ensuremath{\mathcal{H}}}
\newcommand{\KK}{\ensuremath{\mathcal{K}}}
\newcommand{\BB}{\ensuremath{\mathcal{B}}}
\newcommand{\UU}{\ensuremath{\mathcal{U}}}
\newcommand{\TOps}{\ensuremath{\mathcal{T}_{\KK,s}}}
\newcommand{\Fourier}{\ensuremath{\mathcal{F}}}
\newtheorem{thm}{Theorem}[section]
\newtheorem{defi}[thm]{Definition}
\newtheorem{lem}[thm]{Lemma}
\newtheorem{prop}[thm]{Proposition}
\newtheorem{cor}[thm]{Corollary}
\newtheorem{remark}[thm]{Remark}
\begin{document}

\title{Asymptotic behavior of quantum walks with spatio-temporal coin fluctuations}
\author{A. Ahlbrecht$^{1}$}
\email{andre.ahlbrecht@itp.uni-hannover.de}
\author{C. Cedzich$^{1}$}
\email{christopher.cedzich@itp.uni-hannover.de}
\author{R. Matjeschk$^{1}$}
\email{robert.matjeschk@itp.uni-hannover.de}
\author{V.B. Scholz$^{1,2}$}
\email{scholz@phys.ethz.ch}
\author{A.H. Werner$^{1}$}
\email{albert.werner@itp.uni-hannover.de}
\author{R.F. Werner$^{1}$}
\email{reinhard.werner@itp.uni-hannover.de}
\affiliation{$^{1}$Inst. f. Theoretical Physics, Leibniz Universit\"{a}t Hannover,
Appelstr. 2, 30167 Hannover, Germany \\ $^{2}$Inst. f. Theoretical Physics, ETH Zurich, Wolfgang-Pauli-Str. 27, 8093 Zurich, Switzerland}

\begin{abstract}
Quantum walks subject to decoherence generically suffer the loss of their genuine quantum feature, a quadratically faster spreading compared to classical random walks. This intuitive statement has been verified analytically for certain models and is also supported by numerical studies of a variety of examples. In this paper we analyze the long-time behavior of a particular class of decoherent quantum walks, which, to the best of our knowledge, was only studied at the level of numerical simulations before. We consider a local coin operation which is randomly and independently chosen for each time step and each lattice site and prove that, under rather mild conditions, this leads to classical behavior: With the same scaling as needed for a classical diffusion the position distribution converges to a Gaussian, which is independent of the initial state. Our method is based on non-degenerate perturbation theory and yields an explicit expression for the covariance matrix of the asymptotic Gaussian in terms of the randomness parameters.
\end{abstract}
\maketitle

\section{Introduction}
Quantum walks describe the time evolution of a single quantum particle with internal degrees of freedom, for which both space and time are discrete. We study here the case where the underlying space is assumed to be an infinite lattice of arbitrary dimension. The dynamical rule is given by a unitary operator composed of a coin operator acting on the internal degree of freedom only, in a generally site-dependent way, and a fixed shift operator translating the particle by finitely many lattice sites depending on its internal degree of freedom. We are interested in a situation were the coin is varied randomly, as a way to model the imperfections of experimental realizations. If the distribution of coins is highly peaked around a fixed one, we would expect to see a coherent walk with a linear increase of the standard deviation with the number of time steps $t$, at least for some time. In the long run, however, the randomness will be felt, and it is this regime we will study. Depending on where we put the random dependence we can distinguish four cases, summarized in the following table, and discussed in turn below.

\begin{centering}
\begin{table}[h]
\begin{tabular}{|c || c|c |}\hline
coin        &\multicolumn{2}{c|}{spatially}\\
dependence  & fixed &random\\\hline\hline
temporally  & $\sigma\sim t$ & $\sigma\sim 1$\\
fixed       &&\\\hline
temporally  & $\sigma\sim \sqrt t$ & $\sigma\sim\sqrt t$\\
random      &&\\\hline
\end{tabular}
\caption{The type of randomness of the coin operator determines the dependence of the standard deviation $\sigma$ on the number of time steps $t$ of the quantum walk. The cases where the random coin operator is fixed in time respectively space have been studied in the literature and it was shown that they lead to localization ($\sigma\sim 1$) respectively diffusive behavior ($\sigma\sim \sqrt{t}$). Our aim is to analyze the case where both types of randomness are combined and we prove that the asymptotic behavior of such quantum walks is diffusive.}
\end{table}
\end{centering}
\vskip12pt

Coherent walks, i.e., walks \emph{without any randomness}, have been found to be useful in search-like algorithms \cite{Kempe2005,ambainis-2003-1,ambainisdist,childsQW,farhi}, precisely because they spread faster than classical walks\cite{ambainis-2003-1}, which have a similar algorithmic use. They are also the simplest case of quantum simulators, since they can be understood as the one-particle sector of so-called quantum cellular automata \cite{Schumacher}, which are quantum systems on a lattice of infinitely many interacting quantum particles. There has also been done a lot of experimental work to implement quantum walks in a variety of physical setups, starting with cold atoms \cite{Bonn} and followed by experiments with trapped ions \cite{Schaetz,Schaetz2,Blatt} and photons \cite{Silberhorn}.

Temporal fluctuations are implemented by a walk operator that is \emph{random in time}, which means that for every time step a different walk operator has to be applied, keeping, however, the spatial translation invariance in each step. It is clear that we have to take the expectation value over all possible sequences of time dependent coin operators in order to model fluctuations of the coin parameters. This expectation value turns the formerly unitary time evolution into a decoherent quantum channel. This model has been studied in a number of examples \cite{Chand2,Abal,Biham,Brun,KonnoManyCoins,Buzek} and in great detail in \cite{timerandom,joyetr,JoyeMarkov} and it was shown that such a time evolution generically leads to diffusive behavior of the quantum walk, which means that the standard deviation of the position probability distribution grows proportionally to the square root of the number of time steps.

Spatial fluctuations of experimental parameters correspond to the case where the time evolution is still unitary, and the same unitary in every step, but the coin operator is \emph{random in space}. For continuous time, i.e., Hamiltonian systems this is the well known Anderson model of disordered crystals, which exhibits localization. This means that the Hamiltonian almost surely has purely discrete spectrum, and the position distribution does not spread at all. The case of quantum walks on a one-dimensional lattice subject to spatial disorder has been studied in a number of examples both numerically \cite{kendon,obuse} and theoretically \cite{Konno2009,Konno2009a,JoyeDisordered,Disordered}. These results show that, at least in one-dimensional systems, spatial disorder implies dynamical localization, meaning that after arbitrarily many time steps the quantum walker is confined to a finite region of the lattice, up to exponentially small corrections.

In this paper we examine the case where \emph{both types of disorder} appear simultaneously. Such quantum walks have been studied numerically for example in \cite{Romanelli2} and \cite{kendon} and the simulations indicate diffusive behavior. We use the coin and shift decomposition mainly to have a precise meaning for independently identically distributed randomness. For this setting we develop a general theory of the asymptotic position distribution and find diffusive scaling. More precisely, the scaled limiting distribution is exactly gaussian and independent of the initial state. This  distinguishes the present case from only temporal randomness, where we get gaussianness only in every momentum component. Since momentum is conserved, a residual dependence on the initial state remains, and since the diffusion constant depends on momentum, the resulting mixture of Gaussians is no longer a Gaussian.

Let us briefly outline the structure of this paper. We start in section \ref{sec:model} by the mathematical formulation of the model, followed by the general examination in section \ref{sec:generaltheory}. We continue our discussion by the application to a variety of examples in section \ref{sec:examples}. In section \ref{sec:generalizations} we comment on generalizations of our results to the case of more general quantum walks and we conclude in section \ref{sec:conclude} by discussing some open problems left for future research.

\section{Models of Quantum Walks with random coin}\label{sec:model}
Abstractly, quantum walks can be defined as a discrete time evolution of a quantum particle with internal state space $\KK$ moving with strictly finite propagation speed on a lattice $\Integers^s$. Usually, one also assumes translation invariance of the time evolution which then yields a structure theorem \cite{timerandom} for the class of all translation invariant and possibly decoherent quantum walks. Hence, the underlying Hilbert space is $\HH=\WSp$ and quite commonly a single time step is realized by a composition of a local \emph{coin operator} $C$ and a conditional translation operator called (state-dependent) \emph{shift operator} $S$. Throughout this paper we will assume that the shift operator $S$ is given by a unitarily implemented quantum channel, hence, if $C$ is also unitarily implemented we can represent the quantum walk by a unitary matrix $W$ acting on $\WSp$. We denote the unitary operators corresponding to the coin respectively shift again by $C$ respectively $S$, and hence, $W=S\cdot C$. To begin with we specify $S$ and $C$ in the case where both are unitarily implemented.

We denote the elements of the standard basis of $\WSp$ by $\Ket{x\otimes i}$, where $x\in\Integers^s$ labels the positions and $i=1,\ldots,\dim \KK$ labels a basis of $\mathcal K$ such that $S$ is given by
\begin{equation}
\label{Eq:ShiftDef}
S\Ket{x\otimes i}=\Ket{x+v_i\otimes i},
\end{equation}
with some vectors $v_i\in\Integers^s$. A single time step at time $t$ is generated by the unitary operator
\[
W_t=S\cdot C_t\, ,\quad C_t=\bigoplus_{x\in\Integers^s}U_{x,t}\, ,
\]
where $U_{x,t}$ is a unitary matrix of dimension $\dim \KK$ depending on the time $t$ and the lattice site $x$. Ideally, the coin operator is translation invariant and constant in time, that is $U_{x,t}=U$. In this case we can write $C=\idty_{\Integers^s}\otimes U$ and it is well-known that this generically leads to ballistic behavior of the quantum walk, that is, the standard deviation of the position distribution grows linearly with the number of time steps $\sigma(t)\sim t$.

If the coin operator at a fixed time $t$ is translation invariant, but varies in time, we have $U_{x,t}=U_t$. In this case, the interpretation of fluctuating coin parameters corresponds to a lack of controllability of the unitary $U_t$. In other words, instead of a deterministic sequence $U_1,\ldots ,U_t$ applied sequentially to an initial state we actually have to take the expectation value over all possible sequences of time dependent coin operators. In fact, we cannot control which coin operator happens at a certain time and according to quantum mechanics and its statistical nature we have to repeat the experiment several times, each with a different sequence of coin operators, in order to extract the position distribution of the quantum walk after a fixed number of time steps. Let us assume that the coins $U_t$ are distributed independently and identically in time according to some measure $\nu$ on $\UU(\KK)$, the space of unitary operators on $\KK$. We identify the underlying probability space $\Omega$ with $\UU(\KK)$ and an element $\omega\in\Omega$ uniquely determines an operator $U_\omega\in\UU(\KK)$\footnote{In a slight abuse of notation we will not distinguish between the probability measures on $\Omega$ and $\UU(\KK)$ and just use the letter $\nu$ for both of them.}. With this notation we can describe a single time step of an observable $A$ in the Heisenberg picture by the application of a decoherent quantum channel according to
\[
\Walk (A)=S^*\cdot \mathop{\mathbbm E}\limits_\omega\left[( \idty_{\Integers^s}\otimes U^*_\omega)\cdot A \cdot (\idty_{\Integers^s}\otimes U_\omega)\right]\cdot S
\]
where the expectation value is taken with respect to the probability distribution $\nu$ of the coin operators $U_\omega\in\UU(\KK)$. This case has been studied in \cite{timerandom,joyetr,JoyeMarkov} and it was shown that such a time evolution generically leads to diffusive behavior, by which we mean that the standard deviation of the position distribution grows proportional to the square root of the number of time steps $\sigma(t)\sim \sqrt{t}$.

If on the other hand the coin operator is constant in time, at least for a large number of time steps, but inhomogeneous in space, we have $U_{x,t}=U_x$. Here, the time evolution after $t$ time steps is generated by the $t$-fold concatenation of $W=SC$, where $C=\bigoplus_xU_x=\sum_x \Ketbra{x}{x}\otimes U_x$ is the coin configuration generated by the unitary matrices $U_x$. If now the fluctuation of the experimental parameters is only spatial one gets a unitary time evolution, but if on the other hand there is a temporal fluctuation in $C$ on a large time scale, comparable to the duration of a single run of statistical data collection, one needs to take the expectation value over all possible spatial realizations of the coin operator $C$. Assuming again that the matrices $U_x$ are distributed independent and identically according to some measure $\nu$ on $\UU(\KK)$ we can write the time evolution after $t$ time steps as
\[
\Walk_t (A) =\mathop{\mathbbm E}\limits_\omega\left[\left(S^*\cdot \bigoplus_x U^*_{\omega(x)}\right)^t \cdot A\cdot \left(\bigoplus_x U_{\omega(x)}\cdot S\right)^t\right]\, ,
\]
where the expectation is taken with respect to the spatial configurations of $C$, which mathematically corresponds to the infinite product measure $\nu^\infty$ of $\nu$ defined on the probability space $\prod_{\Integers^s}\UU(\KK)$. The one-dimensional case $\HH=\ell^2(\Integers)\otimes \Complex^2$ was analyzed in \cite{Disordered,JoyeDisordered} and it was shown that such a time evolution yields Anderson localization, that is, up to exponentially small corrections, the position distribution of the quantum walker has finite support on the lattice $\Integers$ under rather general assumptions on the distribution $\nu$. Those results also apply to the case where we do not average over all possible spatial configurations, in other words, almost all possible configurations already show Anderson localization and large-scale temporal fluctuations are not required for localization.

In this paper we will analyze the combination of the former mentioned cases. Fluctuations of the coin parameters are now assumed to happen in space as well as in time on the scale of a single time step. Mathematically, this means that the $t$-fold time evolution of a single run of the experiment is given by the unitary operator $W_t\cdot \ldots \cdot W_1=SC_t\cdot\ldots \cdot SC_1$. Similarly to the other models we have to take the expectation with respect to the distribution of the coin operators, but now in space and time. A crucial assumption we impose on our model is again that the coins are independent and identically distributed both in time and space according to a measure $\nu$ on $\UU(\KK)$. Consequently, one time step of the evolution can be written as
\begin{equation}
\label{Eq:WalkPosSpace}
\Walk (A) = S^* \Coin (A)S\, ,
\end{equation}
where $S$ denotes the unitary shift operator and $\Coin$ the decoherent coin operator stemming from the expectation value with respect to all possible coin realizations. The action of the averaged coin operator $\Coin$ on a generic operator $A=\sum_{x,y}\Ketbra{x}{y}\otimes A_{xy}$ can be written as
\begin{eqnarray}
\label{Eq:CoinPosSpace}
\Coin\left( \sum_{x,y}\Ketbra{x}{y}\otimes A_{xy}\right) &=& \sum_{x,y}\Ketbra{x}{y}\otimes\left( \delta_{xy} \int \nu(d\omega) U_\omega^* A_{xy} U_\omega\right.\\
&&\left.+ (1-\delta_{xy})  \widetilde U^* A_{xy} \widetilde U_{\omega} \right)\,,\quad \widetilde U=\int\nu(d\omega) U_{\omega}\,.\nonumber
\end{eqnarray}
Now since the distribution $\nu$ is independent of the lattice site $x$ it follows that $\Walk$ itself is a translation invariant operator. This is similar to the model considered in \cite{timerandom}, the crucial difference being that there the existence of a Kraus decomposition of $\Walk$ in terms of translation invariant Kraus operators was assumed. This, however, is not the case in this model, where a Kraus decomposition is given by the Kraus operators corresponding to all possible realizations of coin operators, which is a decomposition into non-translation invariant Kraus operators. We will further develop our method used in \cite{timerandom} in order to cope also with the case of fluctuations in space and time and prove that this in fact leads to diffusive behavior.

\section{The Perturbation Method}\label{sec:generaltheory}
Our goal is to determine the scaling of the standard deviation $\sigma(t)$ of the position probability in time, in particular to distinguish between ballistic ($\sigma(t)\sim t$) and diffusive ($\sigma(t)\sim \sqrt{t}$) behavior. Using perturbation theory we compute the asymptotic limit $t\rightarrow\infty$ of the position distribution.

We start this section with a description of the general theory we are going to apply to quantum walks according to \eqref{Eq:WalkPosSpace} and \eqref{Eq:CoinPosSpace}. Since our method is based on perturbation theory of infinite dimensional operators we need to verify analyticity of the series expansions explicitly. This is done in the second part of this section. The actual analysis of our quantum walk model is postponed to the third part of this section, where we summarize our results.

\subsection{General Theory}
Similarly to the approach in \cite{timerandom} we compute the characteristic function of the position distribution of the quantum walk with initial state $\rho_0$ after $t$ time steps scaled by $\varepsilon$
\begin{equation}
\label{Eq:CharFunct}
C_{t,\varepsilon}(\lambda)=\tr(\rho_0 \Walk^t(\exp^{\ii \varepsilon\lambda \cdot Q}))
\end{equation}
and determine the limit $C(\lambda)=\lim\limits_{t\rightarrow\infty}C_{t,\varepsilon}(\lambda)$, where $\varepsilon$ is either chosen as $1/t$ in ballistic scaling or $1/\sqrt{t}$ in diffusive scaling. The method we are going to incorporate is based on perturbation theory of bounded operators \cite{Kato}. The idea is to introduce a similarity transform
\begin{alignat}{3}
	\phi:\mathcal B&(\mathcal H)\:&\longrightarrow\:&\mathcal B(\mathcal H)	\\
								&A						&\longmapsto\:&\phi(A)=Ae^{i\varepsilon\lambda Q}\nonumber
\end{alignat}
and define the operator
\begin{equation}
\label{Eq:WalkEps}
\mathbf W_\varepsilon:=\phi^{-1}\circ\mathbf W\circ\phi
\end{equation}
on $\mathcal B(\mathcal H)$, i.e. $\Walk_\varepsilon(A):=\Walk(A\exp^{\ii \varepsilon \lambda \cdot Q})\exp^{-\ii \varepsilon \lambda \cdot Q}$, and rewrite
\begin{equation}
\label{Eq:WEps}
\Walk^t(\exp^{\ii \varepsilon \lambda \cdot Q}) = \Walk_\varepsilon ^t (\idty)\exp^{\ii \varepsilon \lambda \cdot Q}\, .
\end{equation}
When inserting this into $C(\lambda)$ we can neglect the factor $\exp^{\ii \varepsilon \lambda \cdot Q}$ since $\rho_0$ is a trace-class operator and hence
\begin{align}
\label{Eq:CharFuncEps}
	C(\lambda)=\lim_{t\rightarrow\infty}\tr\left(\rho_0 \Walk_\varepsilon^t(\mathbbm1)\exp^{\ii \varepsilon\lambda \cdot Q}\right)&=\tr\left(\lim_{t\rightarrow\infty}\left(\rho_0 \Walk_\varepsilon^t(\mathbbm1)\exp^{\ii \varepsilon\lambda \cdot Q}\right)\right)	\\
			&=\tr\left( \lim_{t\rightarrow\infty}\left(\Walk_\varepsilon^t(\mathbbm1)\right)\lim_{t\rightarrow\infty}\left(\rho_0\exp^{\ii \varepsilon\lambda \cdot Q}\right)\right)	\nonumber\\
			&=\tr\left(\rho_0 \lim_{t\rightarrow\infty}\Walk_\varepsilon^t(\mathbbm1)\right) .\nonumber
\end{align}
In \eqref{Eq:CharFuncEps} we assumed that the limit of $\Walk_\varepsilon^t(\idty)$ for $t\rightarrow\infty$ exists in operator norm with appropriate scaling of $\varepsilon$. In fact, our goal is to determine this limiting operator via perturbation theory and by inserting it into \eqref{Eq:CharFuncEps} we get the characteristic function of the asymptotic distribution of $\Walk$ and $\rho_0$. We interpret $\varepsilon$ as a perturbation parameter, so we act with a high power of the perturbed operator $\Walk_\varepsilon$ on the eigenvector $\idty$ of the unperturbed operator $\Walk=\Walk_0$.

Before we deepen our analysis let us sketch the results to be expected. The operator $\Walk_\varepsilon$ approaches $\Walk$ as $\varepsilon\rightarrow 0$. Moreover, the perturbed eigenvector $A_\varepsilon$, obeying $\Walk_\varepsilon (A_\varepsilon)=\mu_\varepsilon A_\varepsilon$, approaches $\idty$ as $\varepsilon\rightarrow 0$, hence, we expect
\begin{equation}
\Walk_\varepsilon^t(\idty) \mathop{\longrightarrow}\limits_{t\rightarrow\infty}\mu_\varepsilon^t \idty \,.
\end{equation}
Consequently, the characteristic function is given by $C(\lambda)=\lim_{t\rightarrow\infty}\mu_\varepsilon^t$, which, in contrast to the case considered in \cite{timerandom}, is always independent of the initial state $\rho_0$. From the perturbation expansion $\mu_\varepsilon=1+\mu'\varepsilon + \mu'' \varepsilon^2/2 +\ldots$ with $\mu'=\ii v\cdot\lambda$ and $\mu''=-\lambda^T\cdot D\cdot \lambda$ we get in ballistic scaling $\varepsilon=1/t$ the characteristic function
\begin{equation}
C_{1/t}(\lambda)=\exp^{\ii v\cdot\lambda}\,,
\end{equation}
which is the characteristic function of a point mass at $v$ corresponding to a constant drift with velocity $v$. If $\mu'=0$ we can look at the diffusive scaling $\varepsilon=1/\sqrt{t}$ of $\Walk$ which yields
\begin{equation}
C_{1/\sqrt{t}}(\lambda)=\exp^{-\lambda^T\cdot D\cdot \lambda}\,
	\label{eq:DiffScalingCharFctn}
\end{equation}
corresponding to a position distribution which is a Gaussian with covariance matrix $D$.

\subsection{Analytic Perturbation Theory for Quantum Walks}
To begin with let us put this problem rigorously into the context of perturbation theory. It is convenient to consider vectors $\psi=\sum_x \Ketbra{x}{x}\otimes \psi_x\in\WSp$ as functions $\psi:x\mapsto \psi_x$, in other words, we identify $\WSp$ with the set $\ell^2(\Integers^s,\KK)$ of all $\KK$-valued square summable functions on $\Integers^s$. Then, a translation by a vector $y\in \Integers^s$ on $\WSp$, which we denote by $T_y$, can be defined via $T_y \psi : x\mapsto \psi_{x-y}$. With the help of these $T_y$ we define translations of bounded operators $A\in \BB(\WSp)$ by $\tau_y(A)=T_y A T_y^*$ and denote the set of all translation invariant bounded operators on $\WSp$ by $\TOps\subset\mathcal B(\WSp)$. The defining equation for $\TOps$ is
\begin{equation}
A \in \TOps \quad\Leftrightarrow \quad \tau_y (A)=A \quad \forall \,y\in\Integers^s \,.
\end{equation}

Now, let us argue why $\TOps$ constitutes a vector space on which $\Walk_\varepsilon$ acts, which is expressed by $\tau_y \circ \Walk_\varepsilon=\Walk_\varepsilon\circ \tau_y$ for all $y\in\Integers^s$. Indeed, though the similarity transform $\phi$ on $\BB(\WSp)$ does not preserve translation invariance, more precisely $\tau_y\circ\phi = \exp^{\ii \varepsilon \lambda\cdot y}\phi\circ \tau_y$, the operator $\Walk_\varepsilon=\phi^{-1}\circ \Walk \circ \phi$ commutes with translations $\tau_y$ due to the fact that $\Walk$ preserves translation invariance and the two appearing phase factors cancel each other.

The analysis of translation invariant operators $A\in\TOps$ and maps acting on $\TOps$ is much simplified by introducing the Fourier transform on $\WSp$ via
\begin{equation}
\label{Eq:Fourier}
(\Fourier \psi)(p)=(2\pi)^{-\frac{s}{2}}\sum_{x\in\Integers^s} \exp^{\ii x\cdot p} \psi_x\, ,\quad p\in [0,2\pi )^s.
\end{equation}
Then each $A\in\TOps$ becomes a multiplication operator in Fourier space, i.e. it exists a unique matrix valued function $p\mapsto A(p)\in\mathcal B(\mathcal K)$ such that $(\mathcal FA\psi)(p)=A(p)\psi(p)$ \cite{SteinWeiss}, and vice versa. This leads to the following sesqui-linear form on $\TOps$
\begin{equation}
\label{Eq:ScalarProduct}
\Scp{A}{B}:= \frac{1}{(2\pi)^s} \int\limits_{[0,2\pi)^s}\! d^s p\, \frac{1}{\dim \KK}\tr_\KK A^*(p) B(p)\, ,
\end{equation}
which in fact is even a scalar product turning $\TOps$ into a separable Hilbert space. Another way of interpreting \eqref{Eq:ScalarProduct} is to observe that a translation invariant bounded operator $A\in\TOps$ is fully characterized by its matrix entries at the origin, i.e. if we expand $A$ in position basis as $A=\sum_{x,y\in\Integers^s}\Ketbra{x}{y}\otimes A_{xy}$ with $A_{xy}\in\BB(\KK)$ then $A_{xy}=A_{0,x-y}$. The corresponding multiplication operator in Fourier space is now given by $A(p)=\sum_{x\in\Integers^s}\exp^{\ii p\cdot x}A_{x0}$ and therefore we get the alternative expression
\begin{equation}
\Scp{A}{B}=\tfrac{1}{\dim\KK}\sum_{x\in\Integers^s}\tr_\KK A_{x0}^*B_{x0}\,,
\end{equation}
which is finite since $A$ and $B$ are bounded operators. We denote the norm on $\TOps$ induced by this scalar product by $\Norm{.}$, thus, $\Norm{A}^2=\Scp{A}{A}$ for all $A\in\TOps$.

Now, after we have introduced the Hilbert space $\TOps$ we consider the specific form of quantum walks according to \eqref{Eq:ShiftDef}, \eqref{Eq:WalkPosSpace} and \eqref{Eq:CoinPosSpace}. The shift operator $S$ is represented in momentum space by conjugation with the $p$-dependent $\dim\KK$-dimensional matrix
\begin{equation}
\label{Eq:Shift}
S(p)=
\left(
\begin{array}{ccc}
\exp^{\ii v_1\cdot p} & 0 &\dots \\
0 & \ddots & \\
\vdots & & \exp^{\ii v_{\dim\KK}\cdot p}
\end{array}
\right)\, ,
\end{equation}
hence, the operator $\Walk$ acts on a translation invariant bounded operator $A(p)$ in the following way.
\begin{eqnarray}
\label{Eq:WalkMom}
\Walk(A)(p) &=& S(p)^* \left( \int\nu(d\omega) U_\omega^*A_0 U_\omega \right.\\
&&\left.+ \widetilde U^* (A(p)-A_0)\widetilde U\right)S(p) \nonumber
\end{eqnarray}
In this equation $A_0$ denotes the $p$-independent term in $A(p)$, which can be represented as $A_0=(2\pi)^{-s}\int d^s p A(p)$. The modified operator $\Walk_\varepsilon$ is now given by
\begin{eqnarray}
\Walk_\varepsilon(A)(p) &=& S(p)^* \left( \int\nu(d\omega) U_\omega^*A_0 U_\omega \right.\\
&&\left.+ \widetilde U^* (A(p)-A_0)\widetilde U\right)S(p+\varepsilon\lambda) \, .\nonumber
\end{eqnarray}
where the momentum shift of $\varepsilon\lambda$ arises from the definition of the perturbed walk operator $\Walk_\varepsilon$ and the fact that the coin operator $\Coin$ commutes with $\exp^{\ii \varepsilon \lambda\cdot Q}$, together with the equation
\begin{equation}
	\exp^{i\varepsilon\lambda Q}S(p)\exp^{-i\varepsilon\lambda Q}=S(p+\varepsilon\lambda)\, .
\end{equation}
Our goal is to apply non-degenerate perturbation theory to the operator $\Walk_\varepsilon$, in particular, the aim is to determine the first and second order equations of the perturbation theory in $\varepsilon$. The correctness of the results obtained by equating coefficients of powers of $\varepsilon$ is in fact non-trivial since $\Walk_\varepsilon$ is defined on an infinite dimensional Hilbert space $\TOps$. Hence, we first have to establish analyticity of $\Walk_\varepsilon$, the eigenvector $A_\varepsilon$ and the corresponding eigenvalue $\mu_\varepsilon$ with $A_\varepsilon\rightarrow \idty$ and $\mu_\varepsilon\rightarrow 1$ as $\varepsilon\rightarrow 0$. This can be done by using the following theorem, which is an adaption of the well-known theorem of Kato and Rellich \cite{Kato,SimonPerturbationTheory} to our setting.
\begin{thm}[Kato-Rellich]
\label{thm:KatoRellich}
Assume that the operator $\Walk_\varepsilon$ is bounded on $\TOps$ and the limit of the difference quotient
\[
\lim_{\Delta\rightarrow 0}\frac{\Walk_{\varepsilon+\Delta}-\Walk_\varepsilon}{\Delta}\, ,
\]
which we then call the derivative of $\Walk_\varepsilon$ at $\varepsilon$, exists in operator norm for all $\varepsilon$ in an open subset of $\Complex$ containing the origin. If the eigenvalue equation $\Walk (\idty)=\idty$ of the unperturbed operator $\Walk=\Walk_0$ is non-degenerate, then, for small enough $\varepsilon$, there exists an analytic eigenvector $A_\varepsilon$ with analytic and non-degenerate eigenvalue $\mu_\varepsilon$ such that $A_\varepsilon\rightarrow \idty$ and $\mu_\varepsilon\rightarrow 1$ as $\varepsilon\rightarrow 0$.
\end{thm}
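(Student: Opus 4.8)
The plan is to deduce this from the abstract Kato--Rellich / Kato perturbation theorem for bounded operators on a Banach (here Hilbert) space, so that the only real work is verifying the two hypotheses in our concrete setting and checking that the standard machinery produces an eigenvector converging to $\idty$. First I would record the precise abstract statement we invoke: if $\varepsilon\mapsto\Walk_\varepsilon$ is an analytic family of bounded operators on a Banach space in the sense of Kato (norm-differentiable on an open set, equivalently given locally by a norm-convergent power series), and $\mu_0=1$ is an isolated eigenvalue of $\Walk_0$ of finite algebraic multiplicity, then the part of the spectrum near $1$ is given, for small $\varepsilon$, by finitely many eigenvalues depending analytically (in the Puiseux sense) on $\varepsilon$, together with an analytic family of spectral projections $P_\varepsilon$ of constant rank obtained by the Riesz contour integral $P_\varepsilon=\frac{1}{2\pi\ii}\oint_\Gamma (z-\Walk_\varepsilon)^{-1}\,dz$ around a small circle $\Gamma$ enclosing $1$ and no other spectrum of $\Walk_0$. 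When the unperturbed eigenvalue is \emph{non-degenerate}, $\operatorname{rank}P_0=1$, so $\operatorname{rank}P_\varepsilon=1$ for small $\varepsilon$, the Puiseux series is an honest power series, and one gets a genuine analytic eigenvalue $\mu_\varepsilon$ and a genuine analytic eigenvector.

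The substantive step is therefore to check hypothesis (i): that $\Walk_\varepsilon$ is a bounded operator on $\TOps$ and that $\varepsilon\mapsto\Walk_\varepsilon$ is norm-analytic near $0$. Boundedness is already essentially established by the explicit formula for $\Walk_\varepsilon$ in Fourier representation: $\Walk_\varepsilon$ is built from (a) the fixed bounded projection $A\mapsto A_0=(2\pi)^{-s}\int d^sp\,A(p)$ onto the constant part, (b) conjugations by the contractions $\int\nu(d\omega)\,U_\omega^*(\cdot)U_\omega$ and $\widetilde U^*(\cdot)\widetilde U$, and (c) multiplication by the unitary-matrix-valued functions $S(p)^*$ on the left and $S(p+\varepsilon\lambda)$ on the right. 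Each piece is bounded on $\TOps$ uniformly for $\varepsilon$ in a neighborhood of $0$, so $\|\Walk_\varepsilon\|$ is locally bounded. For analyticity I would argue that the $\varepsilon$-dependence enters \emph{only} through the right multiplication by $S(p+\varepsilon\lambda)$; since $S(p)=\operatorname{diag}(e^{\ii v_i\cdot p})$ and $e^{\ii v_i\cdot(p+\varepsilon\lambda)}=e^{\ii v_i\cdot p}\sum_{n\ge 0}(\ii v_i\cdot\lambda)^n\varepsilon^n/n!$ converges in supremum norm over $p$, the matrix function $p\mapsto S(p+\varepsilon\lambda)$ is given by a power series in $\varepsilon$ converging in $\sup_p\|\cdot\|$, hence right multiplication by it is a norm-convergent power series of bounded operators on $\TOps$. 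Composing with the $\varepsilon$-independent bounded pieces preserves this, so $\Walk_\varepsilon=\sum_n \varepsilon^n \Walk^{(n)}$ with $\Walk^{(n)}$ bounded and the series norm-convergent for $|\varepsilon|$ small; equivalently, the difference quotient $(\Walk_{\varepsilon+\Delta}-\Walk_\varepsilon)/\Delta$ converges in operator norm, which is exactly what the theorem asks. (One can also just differentiate termwise to name the derivative explicitly.)

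For hypothesis (ii) one checks that $\Walk_0(\idty)=\idty$: at $\varepsilon=0$, $\Walk_0(A)(p)=S(p)^*\bigl(\int\nu(d\omega)U_\omega^*A_0U_\omega+\widetilde U^*(A(p)-A_0)\widetilde U\bigr)S(p)$, and plugging $A=\idty$ (so $A(p)\equiv\idty_\KK$, $A_0=\idty_\KK$) gives $S(p)^*\idty_\KK S(p)=\idty_\KK$. Non-degeneracy of this eigenvalue is the hypothesis we are \emph{allowed to assume} here (it is an explicit condition on the model, to be verified separately in the examples); given it, $\operatorname{rank}P_0=1$. Then the Riesz projection argument above yields rank-one $P_\varepsilon$ analytic in $\varepsilon$, and one obtains the eigenvector by applying $P_\varepsilon$ to the unperturbed eigenvector: $A_\varepsilon:=P_\varepsilon(\idty)$. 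Since $P_\varepsilon\to P_0$ in norm and $P_0(\idty)=\idty$, we get $A_\varepsilon\to\idty$; since $P_\varepsilon$ is analytic and $\Walk_\varepsilon$ analytic, $A_\varepsilon$ is analytic; and $\mu_\varepsilon:=\langle\idty\,|\,\Walk_\varepsilon A_\varepsilon\rangle/\langle\idty\,|\,A_\varepsilon\rangle$ (or more invariantly, the unique point of $\sigma(\Walk_\varepsilon)$ inside $\Gamma$) is analytic with $\mu_0=1$. The only point requiring a little care, and the step I expect to be the main obstacle, is the norm-analyticity claim: one must make sure the perturbation really is controlled \emph{uniformly in $p$} (so that it is analytic as a map into $\BB(\TOps)$, not merely pointwise in $p$), and here the boundedness of the shift vectors $v_i$ together with the finite dimension of $\KK$ is exactly what makes the $\sup_p$-norm power series converge; everything else is bookkeeping with the standard Kato contour-integral construction.
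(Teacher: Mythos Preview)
The paper does not give its own proof of this theorem; it simply writes ``For a proof of this theorem we refer to \cite{Kato,SimonPerturbationTheory}.'' Your sketch via the Riesz contour projection $P_\varepsilon=\frac{1}{2\pi\ii}\oint_\Gamma(z-\Walk_\varepsilon)^{-1}\,dz$, constancy of rank, and $A_\varepsilon:=P_\varepsilon(\idty)$ is precisely the standard Kato argument those references contain, so in that sense your approach coincides with the paper's.

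Two remarks are worth making. First, much of what you wrote is not a proof of the theorem as stated but rather a verification of its \emph{hypotheses} in the specific model: boundedness, norm-analyticity of $\varepsilon\mapsto\Walk_\varepsilon$ via the explicit dependence through $S(p+\varepsilon\lambda)$, and the eigenvector check $\Walk_0(\idty)=\idty$. In the paper this is deliberately separated out into Proposition~\ref{Prop:BasicProp}, which establishes exactly these points (differentiability, the bound $\|\Walk_\varepsilon\|_{op}\le\max_i|e^{\ii\varepsilon\lambda\cdot v_i}|$, and non-degeneracy from strict contractivity of some $\Walk^n$ on $\{\idty\}^\perp$). So your argument is correct but conflates the abstract perturbation statement with the model-specific verification that follows it.

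Second, a small gap: the Riesz projection argument needs $1$ to be an \emph{isolated} point of $\sigma(\Walk_0)$, not merely a simple eigenvalue, and the theorem as phrased only says ``non-degenerate.'' In the paper this is harmless because the intended application always goes through Proposition~\ref{Prop:BasicProp}, whose contractivity hypothesis forces the rest of the spectrum into a disk of radius strictly less than $1$, so isolation is automatic. If you want your write-up to stand on its own, you should either add ``isolated'' to the hypothesis or note explicitly where isolation comes from.
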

For a proof of this theorem we refer to \cite{Kato,SimonPerturbationTheory}. Now, we are left to prove that $\Walk_\varepsilon$ is bounded, differentiable and the eigenvalue $1$ of $\Walk$ is non-degenerate. In order to ensure the non-degeneracy of the perturbation theory we assume that some power of $\Walk$ is strictly contractive on the orthogonal complement of $\idty$. Below we identify a class of quantum walks $\Walk$ which are strictly contractive on their own, i.e. $\Norm{\Walk(A)}<\Norm{A}$ for all $A\perp\idty$.

The following proposition in conjunction with theorem \ref{thm:KatoRellich} assures the analyticity of the perturbation theory of $\Walk_\varepsilon$.
\begin{prop}
\label{Prop:BasicProp}
Let $\Walk$ be a quantum walk according to equations \eqref{Eq:ShiftDef}, \eqref{Eq:WalkPosSpace} and \eqref{Eq:CoinPosSpace} and $\Walk_\varepsilon$ be defined by \eqref{Eq:WalkEps}. Assume that some power of $\Walk$ is strictly contractive on $\{\idty\}^\perp$, that is, there exists $n\in\Naturals$ such that $\Norm{\Walk^n(A)}<\Norm{A}$ for all $A\perp\idty$. Then we have the following conclusions:
\begin{itemize}
\item[i)] The derivative of $\Walk_\varepsilon$ exists in operator norm for all $\varepsilon\in\Complex$.
\item[ii)] For all $\varepsilon\in\Complex$ the operator $\Walk_\varepsilon$ is bounded with $\Norm{\Walk_\varepsilon}_{op}\leq\max\limits_i \vert\exp^{\ii \varepsilon\lambda\cdot v_i}\vert$.
\item[iii)] The eigenvalue $1$ of $\Walk$ is non-degenerate.
\end{itemize}
\end{prop}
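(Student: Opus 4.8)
\emph{Part (ii): boundedness.} The plan is to exhibit $\Walk_\varepsilon$ explicitly as a composition of elementary maps on $\TOps$ and bound each factor. Writing $\Walk_\varepsilon(A)(p) = S(p)^*\bigl(\int\nu(d\omega)\,U_\omega^* A_0 U_\omega + \widetilde U^*(A(p)-A_0)\widetilde U\bigr)S(p+\varepsilon\lambda)$, I would split it as $\Walk_\varepsilon = \mathcal S_\varepsilon \circ \Coin_0 \circ \Coin_1$, where $\Coin_1$ is the map replacing the $p$-independent part by its covariance-averaged version, $\Coin_0$ is conjugation by $\widetilde U$ off the diagonal, and $\mathcal S_\varepsilon$ is conjugation $A(p)\mapsto S(p)^* A(p) S(p+\varepsilon\lambda)$. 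The first two maps are contractions in the $\Norm{\cdot}$ norm (they come from completely positive unital maps, which are contractive on each fibre; averaging over $\omega$ and integrating over $p$ only helps), while $\mathcal S_\varepsilon$ introduces the momentum shift. Since $S(p)$ is unitary for real $p$ but only $\max_i\vert\exp^{\ii\varepsilon\lambda\cdot v_i}\vert$-bounded once $\varepsilon$ is complex (the diagonal entries of $S(p+\varepsilon\lambda)$ are $\exp^{\ii v_i\cdot(p+\varepsilon\lambda)}$, of modulus $\vert\exp^{\ii\varepsilon\lambda\cdot v_i}\vert$), conjugation by it costs exactly this factor. Multiplying the bounds gives (ii).

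\emph{Part (i): differentiability in operator norm.} Only the factor $\mathcal S_\varepsilon$ depends on $\varepsilon$, so I would compute the difference quotient $(\mathcal S_{\varepsilon+\Delta}-\mathcal S_\varepsilon)/\Delta$ applied to $A$: it equals $A(p)\mapsto S(p)^*A(p)\,\frac{S(p+(\varepsilon+\Delta)\lambda)-S(p+\varepsilon\lambda)}{\Delta}$. Because $S(q)$ is a diagonal matrix whose entries are entire functions $\exp^{\ii v_i\cdot q}$, the matrix-valued map $q\mapsto S(q)$ is norm-analytic with derivative $S'(q)=\operatorname{diag}(\ii v_i\cdot\lambda\,\exp^{\ii v_i\cdot q})$ (here contracting with $\lambda$), and the difference quotient converges to this derivative \emph{uniformly in $p$} on compact sets of $\varepsilon$, with an explicit error controlled by the second derivative. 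Since the $\TOps$-norm is an $L^2$-average over $p\in[0,2\pi)^s$ of a fibrewise operator norm, uniform-in-$p$ convergence of the integrand implies convergence in the $\TOps$-operator norm. Hence the derivative $\Walk_\varepsilon' = \mathcal S_\varepsilon'\circ\Coin_0\circ\Coin_1$ exists in operator norm for every $\varepsilon\in\Complex$, which is (i).

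\emph{Part (iii): non-degeneracy of the eigenvalue $1$.} Here I would use the hypothesis that $\Walk^n$ is strictly contractive on $\{\idty\}^\perp$. First, $\Walk(\idty)=\idty$ is immediate from \eqref{Eq:WalkMom} since $\int\nu(d\omega)U_\omega^*\idty U_\omega=\idty$, $\widetilde U^*\idty\widetilde U$ on the off-diagonal gives back $\idty$, and $S(p)^*\idty S(p)=\idty$; so $1$ is indeed an eigenvalue. Suppose $\Walk(A)=A$ for some $A\in\TOps$; decompose $A=\alpha\idty + A^\perp$ with $A^\perp\perp\idty$. Since $\Walk$ is linear, preserves $\idty$, and — being built from unital completely positive fibre maps followed by unitary conjugation — is a contraction on $\TOps$ (this contractivity is what makes the orthogonal decomposition $\Walk$-invariant, so $A^\perp$ is again a fixed point in $\{\idty\}^\perp$), iterating gives $\Walk^n(A^\perp)=A^\perp$, forcing $\Norm{A^\perp}=\Norm{\Walk^n(A^\perp)}<\Norm{A^\perp}$ unless $A^\perp=0$. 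Therefore $A=\alpha\idty$ and the eigenspace is one-dimensional.

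\emph{Main obstacle.} The routine parts are the entire-analyticity of $S(q)$ and the unitality/complete positivity giving contractivity; the point that needs care is that the natural norm on $\TOps$ is an integrated Hilbert–Schmidt-type norm over momenta, not a fibrewise sup, so one must check that fibrewise contractivity of the CP maps and uniform-in-$p$ convergence of the difference quotient genuinely transfer to the global operator norm on $\TOps$ — in particular that $\Coin_0$, which couples the $p$-independent component $A_0=(2\pi)^{-s}\int d^sp\,A(p)$ to all fibres, does not spoil the contraction estimate. Verifying that this ``rank-one in $p$'' piece is handled correctly (e.g. by noting $A\mapsto A_0$ is itself an orthogonal projection of norm one on $\TOps$) is the one place where the infinite-dimensionality really has to be confronted.
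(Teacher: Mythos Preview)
Your plan follows the paper's proof closely: write $\Walk_\varepsilon(A)(p)=\Walk(A)(p)\,S(\varepsilon\lambda)$ so that only right-multiplication by the diagonal matrix $S(\varepsilon\lambda)$ carries the $\varepsilon$-dependence, bound that factor by $\max_i\vert\exp^{\ii\varepsilon\lambda\cdot v_i}\vert$ and reduce (ii) to $\Norm{\Walk}_{op}\le 1$; for (i) differentiate $S(\varepsilon\lambda)$ entrywise (the paper simply writes $\Walk_\varepsilon'(A)(p)=\Walk_\varepsilon(A)(p)\cdot\ii\Lambda$); for (iii) pass to $\Walk^n$.

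Two small corrections are in order. First, your justification for the contractivity of $\Coin_0$ is wrong: conjugation by $\widetilde U$ is completely positive but \emph{not} unital, since $\widetilde U^*\widetilde U\neq\idty$ in general. The correct reason---and the one the paper uses---is simply $\Norm{\widetilde U}_{op}\le\int\nu(d\omega)\Norm{U_\omega}_{op}=1$, which bounds $A\mapsto\widetilde U^*A\widetilde U$ in the Hilbert--Schmidt norm via \eqref{Eq:NormUStarU}. Second, in (iii) you do not need $\Walk$ to preserve $\{\idty\}^\perp$: linearity and $\Walk(\idty)=\idty$ alone give $\Walk(A^\perp)=A^\perp$ whenever $\Walk(A)=A$, and then the hypothesis applies directly since $A^\perp\perp\idty$. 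Your parenthetical that ``contractivity is what makes the orthogonal decomposition $\Walk$-invariant'' is both unnecessary and false as stated. With these two fixes your argument is correct and essentially identical to the paper's.
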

\begin{proof}
It is easy to check that the operator $\Walk_\varepsilon'$ defined via its Fourier transform
\[
\Walk_\varepsilon' (A)(p) = \Walk_\varepsilon (A)(p)\cdot \ii \Lambda\, ,
\]
where $\Lambda$ is the $\dim\KK$-dimensional diagonal matrix with matrix entries $\lambda\cdot v_i$ on the diagonal, is indeed the operator norm limit of the difference quotient
\[
\lim_{\Delta\rightarrow 0}\frac{\Walk_{\varepsilon +\Delta}-\Walk_\varepsilon}{\Delta} \, ,
\]
which proves \emph{i)}.

By \eqref{Eq:Shift} we have $S(p+\varepsilon\lambda)= S(p) \cdot S(\varepsilon \lambda)$, which implies
\[
\Scp{\Walk_\varepsilon (A)}{\Walk_\varepsilon (A)} =\Scp{\Walk (A)}{\Walk(A)S(\varepsilon\lambda)S^*(\varepsilon\lambda)}\leq \, \max\limits_i\vert \exp^{\ii \varepsilon \lambda\cdot v_i}\vert^2\Scp{\Walk (A)}{\Walk(A)},
\]
and hence, by the basic definition of the operator norm, we have $\Norm{\Walk_\varepsilon}_{op}\leq\max\limits_i\vert \exp^{\ii \varepsilon \lambda\cdot v_i}\vert\Norm{\Walk}_{op}$. Now, \emph{ii)} follows from $\Norm{\Walk}_{op}=1$, which we prove next.

By writing $\Walk (A)=S^* \Coin(A) S$ and observing that $S$ is unitary we get $\Norm{\Walk (A)}=\Norm{\Coin(A)}$, and hence $\Norm{\Walk}_{op}=\Norm{\Coin}_{op}$.
We denote the first part of the coin operator in \eqref{Eq:CoinPosSpace} by $T$, that is,
\[
T(A)=\int\nu(d\omega) U_\omega^* A U_\omega
\]
and hence
\[
\Coin(A)(p)= T(A_0) + \widetilde U^* (A(p)-A_0)\widetilde U \, .
\]
The Hilbert space $\TOps$ can be decomposed into a direct sum of two orthogonal subspaces $\mathcal{T}_0$ and $\mathcal{T}_0^\perp$ defined via
\begin{eqnarray}\label{Eq:CoinSubspaces}
\mathcal{T}_0 &=& \{ A\in\TOps \,:\, A=A_0\}\\
\mathcal{T}_0^\perp &=& \{ A\in\TOps \,:\, A_0 = 0 \}\nonumber \, .
\end{eqnarray}
Clearly, $\Coin(\mathcal{T}_0)\subset \mathcal{T}_0$ and $\Coin(\mathcal{T}_0^\perp)\subset \mathcal{T}_0^\perp$, from which it also follows that $\Walk(\mathcal{T}_0) \perp \Walk(\mathcal{T}_0^\perp)$. Hence,
\[
\Norm{\Walk}_{op}=\Norm{\Coin}_{op}=\mathrm{max}\{\Norm{T}_{op},\Norm{\widetilde U^*.\widetilde U}_{op}\}\,,
\]
where we consider $T$ respectively $\widetilde U^*.\widetilde U$ as map on $\mathcal{T}_0$ respectively $\mathcal{T}_0^\perp$. First, let us note that
\begin{eqnarray*}
\Norm{T(A)}^2&\leq &\frac{1}{\dim \KK}\int\int \nu(d\omega)\nu(d\omega')\vert\tr_\KK U_\omega^* A^* U_\omega U_{\omega'}^* A U_{\omega'} \vert \\
&\leq &\frac{1}{\dim \KK}\int\int \nu(d\omega)\nu(d\omega')\sqrt{\tr _\KK A^* A }\sqrt{\tr_\KK  A^* A } \\
&=&\Norm{A}^2\, ,
\end{eqnarray*}
in other words, $\Norm{T}_{op}\leq 1$. Now, let $A$ and $B$ be positive operators on $\KK$, then we have
\[
\tr_\KK A B \leq \Norm{A}_{op} \cdot\tr_\KK B\, ,
\]
and hence, by applying this inequality to $\tr_\KK \Coin(A)^*\Coin(A)$, we get
\begin{eqnarray}
\label{Eq:NormUStarU}
\tr_\KK \widetilde U^* A^* \widetilde U\widetilde U^* A \widetilde U &=& \tr_\KK \widetilde U\widetilde U^* A^* \widetilde U\widetilde U^*A \\
&\leq & \Norm{\widetilde U\widetilde U^*}_{op}\cdot\tr_\KK A^*\widetilde U\widetilde U^*A \nonumber\\
&\leq & \Norm{\widetilde U \widetilde U^*}_{op}^2 \cdot \tr_\KK AA^* \nonumber\\
&\leq & \Norm{\widetilde U}_{op}^4 \cdot \tr_\KK AA^* \, .\nonumber
\end{eqnarray}
Clearly, $\Norm{\widetilde U}_{op}\leq \int\nu(d\omega)\Norm{U_\omega}_{op}=1$ which finally proves $\Norm{\widetilde U^*.\widetilde U}_{op}\leq 1$, and hence, $\Norm{\Coin}_{op}\leq 1$.

Any eigenvector of $\Walk$ is also an eigenvector of $\Walk^n$ with eigenvalue raised to the $n$-th power, hence, the contractivity of $\Walk^n$ yields statement \emph{iii)}.
\end{proof}
Of course, the contractivity of $\Walk$ or a power of it depends on the probability distribution $\nu$ of the coin operators $U_\omega$. In particular, the operator $\Walk$ itself is strictly contractive if the operators $U_{\omega'}^*U_\omega$ with $\omega',\omega\in\Omega$ fulfill the following definition.
\begin{defi}
\label{Def:Irreducible}
A set of matrices $\{M_i\,:\, i\in I\}$, where $I$ is an index set and each $M_i$ acts on $\KK$, is said to be irreducible if any invariant subspace is trivial, that is, if $\mathcal{S}$ is a subspace of $\KK$ such that $M_i \mathcal{S}\subset \mathcal{S}$ for all $i\in I$, then we must have $\mathcal{S}=\{0\}$ or $\mathcal{S}=\KK$.
\end{defi}

\begin{prop}
\label{prop:IrredContr}
Let $\Walk$ be a quantum walk according to equations \eqref{Eq:ShiftDef}, \eqref{Eq:WalkPosSpace} and \eqref{Eq:CoinPosSpace}. If the coin operators $U_\omega$ on which the measure $\nu$ is supported are such that the set $\{U_{\omega'}^*U_\omega\,:\,\omega,\omega'\in\Omega\}$ is irreducible on $\KK$, then $\Walk$ is strictly contractive on $\{\idty\}^\perp$, that is, $\Norm{\Walk(A)}<\Norm{A}$ for all $A\perp \idty$.
\end{prop}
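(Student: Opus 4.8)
The plan is to strip off the shift and then handle the two invariant blocks of the decohered coin separately. Since $S$ is unitary, the conjugation $A\mapsto S^*AS$ is an isometry of $\TOps$ fixing $\idty$, so $\Walk=S^*\Coin S$ is strictly contractive on $\{\idty\}^\perp$ exactly when $\Coin$ is. From the proof of Proposition~\ref{Prop:BasicProp} we already know that $\Coin$ is a contraction and preserves the orthogonal splitting $\TOps=\mathcal{T}_0\oplus\mathcal{T}_0^\perp$ of \eqref{Eq:CoinSubspaces}, with $\idty\in\mathcal{T}_0$. Hence $\{\idty\}^\perp=(\mathcal{T}_0\cap\{\idty\}^\perp)\oplus\mathcal{T}_0^\perp$ is an orthogonal decomposition into $\Coin$-invariant subspaces, and it suffices to prove strict contractivity on each summand: (a) on $\mathcal{T}_0^\perp$, where $\Coin$ acts pointwise in $p$ as $B\mapsto\widetilde U^*B\widetilde U$; and (b) on $\mathcal{T}_0\cap\{\idty\}^\perp\cong\{A\in\BB(\KK):\tr_\KK A=0\}$, where $\Coin$ acts as $T(A)=\int\nu(d\omega)\,U_\omega^*AU_\omega$. (We assume $\dim\KK\ge 2$; for $\dim\KK=1$ the coin is degenerate.)

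For (a), the chain \eqref{Eq:NormUStarU} already gives $\Norm{\widetilde U^*B\widetilde U}\le\Norm{\widetilde U}_{op}^2\,\Norm{B}$, so by compactness of the unit sphere it is enough to show $\Norm{\widetilde U}_{op}<1$. Suppose some unit vector $\psi\in\KK$ had $\Norm{\widetilde U\psi}=1$. Then $\widetilde U\psi$ is the $\nu$-average of the unit vectors $U_\omega\psi$, and equality in the triangle inequality for this $\KK$-valued integral (strict convexity of the unit ball of the Hilbert space $\KK$) forces $U_\omega\psi$ to equal a single unit vector $\chi$ for $\nu$-almost every $\omega$; since $\omega\mapsto U_\omega\psi$ is continuous and a closed set of full measure contains the support of $\nu$, this holds for \emph{every} $\omega$ in the support. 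But then $U_{\omega'}^*U_\omega\psi=U_{\omega'}^*\chi=\psi$ for all $\omega,\omega'$ in the support, so $\Complex\psi$ is a nontrivial common invariant subspace, contradicting irreducibility (Definition~\ref{Def:Irreducible}).

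For (b), fix $A$ traceless with $\Norm{A}=1$. Each $U_\omega^*AU_\omega$ has the same Hilbert--Schmidt norm $1$, and $T(A)$ is their $\nu$-average, so the identical equality analysis in the Hilbert space $(\BB(\KK),\Scp{\cdot}{\cdot})$ shows that $\Norm{T(A)}=1$ would force $U_\omega^*AU_\omega$ to equal a fixed matrix $B$ for $\nu$-a.e.\ $\omega$, hence for all $\omega$ in the support. Choosing $\omega_0$ in the support we get $U_\omega^*AU_\omega=U_{\omega_0}^*AU_{\omega_0}$, i.e.\ $A$ commutes with $V_\omega:=U_\omega U_{\omega_0}^*$, for every $\omega$ in the support; consequently $A$ commutes with every $V_\omega^*$ and with all products $V_{\omega'}^*V_\omega=U_{\omega_0}(U_{\omega'}^*U_\omega)U_{\omega_0}^*$. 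This last family is the conjugate by the unitary $U_{\omega_0}$ of $\{U_{\omega'}^*U_\omega\}$, hence again irreducible, so it generates $\BB(\KK)$ as an algebra (Burnside's theorem, or Schur's lemma since the family is closed under adjoints), whence its commutant is $\Complex\idty$. Thus $A=c\idty$, and $\tr_\KK A=0$ gives $A=0$, a contradiction. Since $T$ and $B\mapsto\widetilde U^*B\widetilde U$ are contractions, (a) and (b) upgrade to $\Norm{\Coin(A)}<\Norm{A}$ on each nonzero vector of the corresponding summand, and by orthogonality this gives $\Norm{\Coin(A)}<\Norm{A}$ for every nonzero $A\perp\idty$, hence the same for $\Walk$.

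I expect the two subtle points to be the step extracting almost-everywhere constancy of the integrand from an equality ``norm of the average $=$ average of the norms'' (a strict-convexity argument, carried out once in $\KK$ and once in Hilbert--Schmidt space), and the upgrade of such $\nu$-almost-everywhere identities to identities valid on the whole support of $\nu$ — which is what makes the irreducibility hypothesis (stated in terms of the support) applicable and which relies on continuity of $\omega\mapsto U_\omega$. Once these are in place, the concluding Schur/Burnside argument is routine.
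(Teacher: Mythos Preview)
Your proof is correct and follows the same overall architecture as the paper's: reduce $\Walk$ to $\Coin$, split $\{\idty\}^\perp$ into the two $\Coin$-invariant blocks $\mathcal{T}_0\cap\{\idty\}^\perp$ and $\mathcal{T}_0^\perp$, and rule out norm-preserving vectors in each using the irreducibility of $\{U_{\omega'}^*U_\omega\}$.

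The tactical differences are worth noting. For the $\widetilde U$-block the paper uses the singular value decomposition to produce $\phi,\psi$ with $\int\Scp{\phi}{U_\omega\psi}\,\nu(d\omega)=1$ and then the scalar equality case ``average of numbers in the closed unit disk equals $1$'' to force $\Scp{\phi}{U_\omega\psi}=1$ for all $\omega$; you instead invoke strict convexity of the Hilbert-space unit ball to get $U_\omega\psi$ constant. For the $T$-block the paper expands $\Norm{T(A)}^2$ as a double integral, applies Cauchy--Schwarz termwise to obtain $(U_\omega U_{\omega'}^*)^*A\,U_\omega U_{\omega'}^*=c_{\omega,\omega'}A$ with $|c_{\omega,\omega'}|=1$, and then argues $c_{\omega,\omega'}\equiv 1$; you again use strict convexity in the Hilbert--Schmidt space to get $U_\omega^*AU_\omega$ constant directly, and then reach the commutant via the conjugated family $U_{\omega_0}\{U_{\omega'}^*U_\omega\}U_{\omega_0}^*$. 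Your route is a bit more uniform (one equality-case mechanism for both blocks) and makes the Schur/Burnside step cleaner; the paper's scalar/Cauchy--Schwarz arguments are more elementary and avoid the Bochner-integral equality lemma. You are also explicit about the passage from ``$\nu$-a.e.'' to ``everywhere on the support'' via continuity of $\omega\mapsto U_\omega$ (which in the paper's identification $\Omega=\UU(\KK)$ is the identity map), a point the paper's proof leaves implicit.
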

\begin{proof}
Suppose the set $\{U_{\omega'}^*U_\omega\,:\,\omega,\omega'\in\Omega\}$ is irreducible. It follows from the singular value decomposition of $\widetilde U$ that there exist normalized vectors $\phi,\, \psi\in\KK$ such that
\[
\Norm{\widetilde U}_{op}= \Scp{\phi}{\widetilde U\psi} = \int\nu(d\omega)\Scp{\phi}{ U_\omega \psi}\, ,
\]
and hence, if $\Norm{\widetilde U}_{op}=1$ we must have $\Scp{\phi}{U_\omega \psi}=1$ for all $\omega$. This implies $\Scp{\phi}{U_{\omega'}^*U_\omega \phi}=1$, hence, $\phi$ is an eigenvector of $U_{\omega'}^*U_\omega$ for arbitrary $\omega$ and $\omega'$, which is forbidden since the set $\{U_{\omega'}^*U_\omega\,:\,\omega,\omega'\in\Omega\}$ is assumed to be irreducible. Consequently, $\Norm{\widetilde U}_{op}<1$ and by \eqref{Eq:NormUStarU} $\Norm{\widetilde U^*.\widetilde U}_{op}<1$.

Now, let $A$ be $p$-independent, i.e. $A_0=A$, such that $A\perp \idty$. Assume $\Norm{T(A)}=\Norm{A}$, where $T$ denotes the diagonal part of $\Coin$, that is, $T(A)=\int\nu(d\omega)U_\omega^*AU_\omega$. This implies
\[
\int\int\nu(d\omega)\nu(d\omega')\tr_\KK U_{\omega'}^*A^*U_{\omega'}  U_\omega^*A U_\omega = \tr_\KK A^*A\, .
\]
On the other hand, we can estimate
\begin{eqnarray*}
\int\int\nu(d\omega)\nu(d\omega')\tr_\KK U_{\omega'}^*A^*U_{\omega'}  U_\omega^*A U_\omega &\leq &\int\int\nu(d\omega)\nu(d\omega')\vert\tr_\KK U_{\omega'}^*A^*U_{\omega'}  U_\omega^*A U_\omega\vert\\
&= &\int\int\nu(d\omega)\nu(d\omega')\vert\tr_\KK A^*(U_\omega U_{\omega'}^*)^*A U_\omega U_{\omega'}^*\vert\\
&\leq &\int\int\nu(d\omega)\nu(d\omega') \sqrt{\tr_\KK A^*A}\sqrt{ \tr_\KK A^*A}=\tr_\KK A^*A\, .
\end{eqnarray*}
It follows from $\Norm{T(A)}=\Norm{A}$ that this inequality is actually an equality, hence, we must have
\[
\vert\tr_\KK A^*(U_\omega U_{\omega'}^*)^*A U_\omega U_{\omega'}^*\vert=\tr_\KK A^*A\,,
\]
which, by the Cauchy-Schwarz inequality, means that $A$ and $(U_\omega U_{\omega'}^*)^*A U_\omega U_{\omega'}^*$ must be proportional for arbitrary $\omega$ and $\omega'$. Thus, $(U_\omega U_{\omega'}^*)^*A U_\omega U_{\omega'}^*=c_{\omega,\omega'}\cdot A$ with $\vert c_{\omega,\omega'}\vert =1$ and
\[
\Norm{T(A)}^2=\frac{1}{\dim \KK}\int\int\nu(d\omega)\nu(d\omega')\tr_\KK U_{\omega'}^*A^*U_{\omega'}  U_\omega^*A U_\omega = \frac{1}{\dim \KK}\tr_\KK A^*A \int\int\nu(d\omega)\nu(d\omega')c_{\omega,\omega'}\, .
\]
The assumption $\Norm{T(A)}=\Norm{A}$ entails $c_{\omega,\omega'}=1$ for all $\omega,\omega'$. Thus, $A$ commutes with $U_\omega U_{\omega'}^*$ and since those are irreducible it follows that $A=a\cdot \idty$. This contradicts the assumption $A\perp \idty$, hence, $\Norm{T(A)}<\Norm{A}$.
\end{proof}

\subsection{Asymptotic Position Distribution via First and Second Order Perturbation Theory}\label{Sec:AsymptPosDistr}
Before we start to analyze the first and second order perturbation theory of $\Walk_\varepsilon$ to determine the asymptotic position distribution of quantum walks we have to explain why
\[
\Walk_\varepsilon^t (\idty)\rightarrow \mu_\varepsilon^t \idty
\]
as $t\rightarrow \infty$ and $\varepsilon\rightarrow 0$. First of all, we can write $\idty=A_\varepsilon +(\idty -A_\varepsilon)$ and obtain $\Walk_\varepsilon^t(\idty)=\Walk_\varepsilon^t(A_\varepsilon)+\Walk_\varepsilon^t(\idty-A_\varepsilon)$. With the operator norm estimate $\Norm{\Walk_\varepsilon}_{op} \leq 1$, which is valid for $\varepsilon\in\Reals$, we get
\[
\Norm{\Walk_\varepsilon^t (\idty) - \mu_\varepsilon^t A_\varepsilon }\leq \Norm{\idty -A_\varepsilon}\,
\]
and consequently $\Walk_\varepsilon^t (\idty) \rightarrow \mu_\varepsilon^t A_\varepsilon $ with $\varepsilon\rightarrow 0$ and the assertion follows from $A_\varepsilon \rightarrow \idty$ as $\varepsilon\rightarrow 0$.

The ballistic respectively diffusive scaling of the position distribution in the asymptotic limit is determined by the first respectively second order of the perturbed eigenvalue $\mu_\varepsilon=1+\varepsilon \mu' +\varepsilon^2/2 \mu'' +\ldots$, see also \cite{timerandom}. And indeed, in ballistic scaling, i.e. $\varepsilon =1/t$ we get the asymptotic limit of the characteristic function
\begin{equation}
\label{Eq:CharFuncBall}
C(\lambda )=\lim_{t\rightarrow \infty}\mu_{1/t}^t = \lim_{t\rightarrow \infty}\left( 1+\frac{\mu'}{t}+\ldots \right)^t = \exp^{\mu'}\, .
\end{equation}
If, however, the first order is zero $\mu'=0$ we may consider the diffusive scaling $\varepsilon=1/\sqrt{t}$ of the position distribution and get
\begin{equation}
\label{Eq:CharFuncDiff}
C(\lambda )=\lim_{t\rightarrow \infty}\mu_{1/\sqrt{t}}^t = \lim_{t\rightarrow \infty}\left( 1+\frac{\mu''}{2 t}+\ldots \right)^t = \exp^{\frac{\mu''}{2}}\, .
\end{equation}
\begin{remark}
In contrast to the case considered in \cite{timerandom}, where for some models a dependence on the initial state $\rho_0$ was observed, we see that for the present model the initial state $\rho_0$ is irrelevant.
\end{remark}
Now we have all necessary tools to compute the asymptotic position distribution. The first order correction $\mu'$ can be determined from the first order relation obtained from equating coefficients in $\Walk_\varepsilon (A_\varepsilon)=\mu_\varepsilon A_\varepsilon$, which reads
\begin{equation}
\label{Eq:FirstOrder}
\Walk'(\idty)+\Walk (A')=\mu'\idty+A' \,,
\end{equation}
where $\Walk'$ denotes the derivative of $\Walk_\varepsilon$ at $\varepsilon=0$. The solution to the eigenvector problem $\Walk_\varepsilon (A_\varepsilon)=\mu_\varepsilon A_\varepsilon$ is in general not unique. A common choice for $A_\varepsilon$ is to fix the scalar product of the perturbed eigenvector with the unperturbed eigenvector in the following way
\[
\Scp{\idty}{A_\varepsilon} =1 \quad \Rightarrow \quad \Scp{\idty}{A^{(n)}}=0\quad ,\forall\,n\in\Naturals\, ,
\]
where $A^{(n)}$ denotes the $n$-th order correction of the eigenvalue $A_\varepsilon$. That is, $A_\varepsilon$ can be expressed as
\[
A_\varepsilon = \sum_{n=0}^\infty A^{(n)}\frac{\varepsilon^n}{n!}\,.
\]
Fixing the scalar product in this way does not harm analyticity of the $A_\varepsilon$, at least for small enough $\varepsilon$, since the scalar product of an unnormalized vector with $\idty$ is an analytic function in $\varepsilon$ which is non-zero for small $\varepsilon$.

The standard approach to determine the higher order corrections to the unperturbed eigenvalue $1$ is to expand the equation $\Walk_\varepsilon (A_\varepsilon)=\mu_\varepsilon A_\varepsilon$ into a power series in $\varepsilon $ and then take the scalar product with the unperturbed eigenvector $\idty$. The choice $\Scp{\idty}{A_\varepsilon}=1$, which is equivalent to $\Scp{\idty}{A^{(n)}}=0$, implies $\Scp{\idty}{\Walk(A^{(n)})}=\Scp{\idty}{\Coin(A^{(n)})}=0$ for all $n>0$, thus, some terms in the power series expansion of the eigenvector equation already vanish when taking the scalar product with $\idty$.

By \eqref{Eq:Shift} and $\Walk_\varepsilon(A)(p)=S(p)^*\Coin(A)(p)S(p+\varepsilon \lambda)$ we can express the derivative of $\Walk_\varepsilon$ at $\varepsilon=0$ as
\[
\Walk'(A)(p)=\Walk(A)(p)\cdot \ii \Lambda \, , \quad
\Lambda=
\left(
\begin{array}{ccc}
v_1\cdot \lambda & 0 &\dots \\
0 & \ddots & \\
\vdots & & v_{\dim\KK}\cdot \lambda
\end{array}
\right)
\,,
\]
which leads to the following expression for the first order correction to the eigenvalue
\begin{eqnarray}
\label{Eq:FirstOrderFormula}
\mu' &=& \Scp{\idty}{\Walk'(\idty)}=\frac{1}{(2\pi)^s}\int d^sp\, \frac{1}{\dim\KK}\tr_\KK \Walk'(\idty) \\
&=& \frac{1}{\dim\KK}\sum_i \ii \lambda\cdot v_i =\ii \lambda\cdot \tilde v \, , \nonumber
\end{eqnarray}
where $\tilde v$ is the average of all shift vectors $v_i$. The vector $\tilde v$ is closely related to the index $\ind\, S$ of the unitary shift operator $S$, see \cite{Index} for a detailed discussion of this quantity. By definition we have
\begin{equation}
\label{Eq:Index}
\ind\, S= \sum_i v_i
\end{equation}
and consequently $\tilde v=(\dim \KK)^{-1}\ind\, S$. Clearly, the characteristic function in ballistic scaling is given by $C(\lambda)=\exp^{\ii \lambda \cdot \tilde v}$, which is the characteristic function of a point mass at position $\tilde v$. This represents a constant drift of the particle in the direction of $\ind\, S$. It should be noted, that the constant drift is not a ballistic behavior due to the quantumness of the walk. Rather, $\tilde v $ represents the part of the shift operator, which is not conditioned on the coin state. That is, in an extreme case one may consider a one-dimensional quantum walk with two-dimensional internal state space $\KK=\Complex^2$, where $v_1=v_2$, i.e. where the shift does not depend on the coin states at all. A ballistic spreading due to the quantumness of the walk does therefore not exist in this model. See remark \ref{rem:SubtractBall} below for how to investigate the diffusive order when $\tilde v \neq 0$.
\begin{remark}
\label{rem:SubtractBall}
If $\ind\,S\neq 0$ we may subtract the constant drift according to $\mu'=\ii \lambda\cdot \tilde v$ from the position variable and consider the asymptotic distribution of $\widetilde Q=Q-\tilde v\cdot t$. The characteristic function is given by
\[
C_{t,\varepsilon}(\lambda)=\tr \rho_0 \Walk^t (\exp^{\ii \varepsilon \lambda\cdot \widetilde Q })=\tr \rho_0 \exp^{-\ii \varepsilon t\lambda\cdot \tilde v}\Walk^t (\exp^{\ii \varepsilon \lambda\cdot  Q })\, ,
\]
and hence we have to consider the modified operator $\widetilde\Walk_\varepsilon(A)=\exp^{-\ii \varepsilon \lambda\cdot \tilde v}\Walk(A\exp^{\ii \varepsilon \lambda\cdot Q })\exp^{-\ii \varepsilon \lambda\cdot Q }=\exp^{-\ii \varepsilon \lambda\cdot \tilde v}\Walk_\varepsilon (A)$. The eigenvalue $\tilde \mu_\varepsilon$ is now just given by $\tilde\mu_\varepsilon = \exp^{-\ii \varepsilon \lambda\cdot \tilde v} \mu_\varepsilon$, thus
\begin{eqnarray*}
\tilde\mu_\varepsilon &=&\left(1 -\mu' \varepsilon +\frac{\mu'^2}{2}\varepsilon^2 +\ldots\right)\left(1+\mu'\varepsilon +\frac{\mu''}{2}\varepsilon^2 + \ldots\right)\\
&=& 1 + \frac{\mu''-\mu'^2}{2}\varepsilon^2 +\ldots \, ,
\end{eqnarray*}
which shows that $\tilde \mu'=0$.
\end{remark}
In the following we assume $\mu'=0$ and look at the diffusive scaling of the position distribution. To this end we need to determine the second order of the equation $\Walk_\varepsilon (A_\varepsilon)=\mu_\varepsilon A_\varepsilon$, which reads
\begin{equation}
\label{Eq:SecondOrder}
\Walk''(\idty) + 2 \Walk'(A') + \Walk(A'') = \mu''\idty + 2\mu'A' + A'' \,.
\end{equation}
The second order derivative of the walk operator amounts to
$\Walk''(A)(p)=\Walk'(A)(p)\cdot \ii \Lambda=-\Walk(A)(p)\cdot \Lambda^2$. Again, by taking the scalar product with the unperturbed eigenvector we get
\begin{eqnarray}
\label{Eq:SecondOrderFormula}
\mu'' &=& \Scp{\idty}{\Walk''(\idty)} + 2 \Scp{\idty}{\Walk'(A')}\\
&=& \frac{1}{(2\pi)^s}\int d^sp\, \frac{1}{\dim\KK}\tr_\KK \left(\Walk''(\idty)  +2 \ii \Walk(A')(p)\cdot \Lambda\right)\nonumber \\
&=& \frac{1}{\dim\KK}\sum_i -(\lambda\cdot v_i)^2 +
\frac{1}{(2\pi)^s}\int d^sp\, \frac{2\ii}{\dim\KK} \tr_\KK \left(-\Walk' (\idty) + A'(p)\right)\cdot \Lambda\nonumber \\
&=& \frac{1}{\dim\KK}\sum_i (\lambda\cdot v_i)^2 + \frac{1}{(2\pi)^s}\int d^sp\, \frac{2\ii}{\dim\KK} \tr_\KK  A'(p)\cdot \Lambda \, ,\nonumber
\end{eqnarray}
where we have used \eqref{Eq:FirstOrder} at the fourth step and $\Walk'(A)(p)=\Walk(A)(p)\cdot \ii\Lambda$ in every step. Next, we use \eqref{Eq:FirstOrder} to eliminate $A'$ in this equation. Using $\mu'=0$ and $\Walk'(\idty)(p)=\ii \Lambda$ we get
\[
\Walk(A')(p)-A'(p) = -\ii \Lambda
\]
and from $\mu'=0$, which is equivalent to $\tr_\KK \Lambda =0$, it follows that $\Scp{\idty}{\Lambda}=0$. Since the eigenvalue $1$ of $\Walk$ is non-degenerate we can invert $\Walk-\mathrm{id}$ on $\{\idty\}^\perp$ and apply it to $\Lambda \in \{\idty\}^\perp $. Denoting the pseudo-inverse by $(\Walk-\mathrm{id})^{-1}$ we get the following expression for $A'$.
\begin{equation}
\label{Eq:APrime}
A'(p)=-\ii (\Walk -\mathrm{id})^{-1}(\Lambda)(p)\,.
\end{equation}
Hence, the second order correction $\mu''$ can be written as
\begin{equation}
\label{Eq:SecondOrderFinal}
\mu''=\frac{1}{\dim\KK}\sum_i (\lambda\cdot v_i)^2 + \frac{2}{\dim\KK} \frac{1}{(2\pi)^s}\int d^sp\,\tr_\KK (\Walk -\mathrm{id})^{-1}(\Lambda)(p) \cdot \Lambda\,.
\end{equation}
Note that the dependence of $\Lambda$ on the variable $\lambda$ makes $\mu''$ a quadratic form in $\lambda$ which defines a symmetric matrix $D$ via $\mu'' = -\lambda^T\cdot D\cdot \lambda$. It is easy to see that the position distribution corresponding to the characteristic function $C(\lambda)=\exp^{\frac{\mu''}{2}}$ is a Gaussian with covariance matrix $D$.

The following theorem summarizes the results of this section.
\begin{thm}
\label{thm:MainThm}
Let $\Walk$ be a quantum walk as defined in \eqref{Eq:ShiftDef}, \eqref{Eq:WalkPosSpace} and \eqref{Eq:CoinPosSpace}. Suppose for some $n\in\Naturals$ the power $\Walk^n$ is strictly contractive on $\{\idty\}^\perp$. Then:
\begin{itemize}
\item[i)] \emph {ballistic scaling:}\/  the random variable $Q/t$, converges to a point measure at $\tilde v=\frac{1}{\dim\KK}\,\ind\,S$.
\item[ii)] \emph {diffusive scaling:}\/ assuming $\tilde v=0$ the random variable $Q/\sqrt{t}$, converges to a Gaussian with covariance matrix $D$. The defining equation for $D$ is $\mu''=-\lambda^T\cdot D\cdot \lambda$ with $\mu''$ from  \eqref{Eq:SecondOrderFinal}. Explicitly, the matrix elements of $D$ are given by the formula
\begin{equation}
\label{Eq:CovarianceMatrix}
D_{\alpha\beta} = \frac{1}{\dim\KK}\sum_i v_{i,\alpha} v_{i,\beta} +  \frac{2}{(2\pi)^s}\int d^sp\,\frac{1}{\dim\KK}\tr_\KK R_{\alpha \beta}(p)\,,\quad \alpha,\beta =1,\ldots ,s\, ,
\end{equation}
where $v_{i,\alpha}$ denotes the $\alpha$ component of the vector $v_i$ and $R_{\alpha\beta}(p)$ is defined via
\begin{equation}
\label{Eq:Ralphabeta}
R_{\alpha\beta}(p)= \frac{1}{2}\Bigl((\Walk -\mathrm{id})^{-1}(\Lambda_\alpha)(p) \cdot \Lambda_\beta + (\Walk -\mathrm{id})^{-1}(\Lambda_\beta)(p) \cdot \Lambda_\alpha\Bigr)\, .
\end{equation}
The diagonal matrices $\Lambda_\alpha$ are given by $(\Lambda_\alpha)_{ij}=\delta_{ij}\,v_{i,\alpha}$.
\end{itemize}
\end{thm}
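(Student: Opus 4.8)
\emph{Proof strategy.} The statement collects the ingredients prepared above, so the plan is to assemble them. First I would invoke Proposition~\ref{Prop:BasicProp}: under the hypothesis that $\Walk^n$ is strictly contractive on $\{\idty\}^\perp$ for some $n$, it gives that $\Walk_\varepsilon$ is bounded on $\TOps$ with $\Norm{\Walk_\varepsilon}_{op}\le\max_i\vert\exp^{\ii\varepsilon\lambda\cdot v_i}\vert$ ($=1$ for real $\varepsilon$), that its difference quotient converges in operator norm on a neighbourhood of $0$ in $\Complex$, and that the eigenvalue $1$ of $\Walk=\Walk_0$ is non-degenerate. Theorem~\ref{thm:KatoRellich} then supplies, for small $\varepsilon$, an analytic family $\varepsilon\mapsto(A_\varepsilon,\mu_\varepsilon)$ with $\Walk_\varepsilon(A_\varepsilon)=\mu_\varepsilon A_\varepsilon$, $A_\varepsilon\to\idty$ and $\mu_\varepsilon\to1$ as $\varepsilon\to0$; I would normalize by $\Scp{\idty}{A_\varepsilon}=1$, so that $\Scp{\idty}{A^{(n)}}=0$ for all $n\ge1$.

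Next I would reduce the characteristic function to $\lim_{t\to\infty}\mu_\varepsilon^t$. By \eqref{Eq:WEps} one has $C_{t,\varepsilon}(\lambda)=\tr(\rho_0\,\Walk_\varepsilon^t(\idty)\,\exp^{\ii\varepsilon\lambda\cdot Q})$, the characteristic function of the position distribution of the state after $t$ steps, rescaled by $\varepsilon$. Splitting $\idty=A_\varepsilon+(\idty-A_\varepsilon)$ gives $\Walk_\varepsilon^t(\idty)=\mu_\varepsilon^t A_\varepsilon+\Walk_\varepsilon^t(\idty-A_\varepsilon)$, and $\Norm{\Walk_\varepsilon}_{op}\le1$ (real $\varepsilon$) yields $\Norm{\Walk_\varepsilon^t(\idty)-\mu_\varepsilon^t A_\varepsilon}\le\Norm{\idty-A_\varepsilon}$, which tends to $0$ as $\varepsilon\to0$ \emph{uniformly in $t$}; moreover $\vert\mu_\varepsilon\vert\le\Norm{\Walk_\varepsilon}_{op}\le1$. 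Together with $A_\varepsilon\to\idty$, the strong convergence $\exp^{\ii\varepsilon\lambda\cdot Q}\to\idty$, and the trace-class property of $\rho_0$, this justifies the interchange of limits in \eqref{Eq:CharFuncEps}, giving $C(\lambda)=\lim_{t\to\infty}\mu_\varepsilon^t$ with $\varepsilon=1/t$ (ballistic) or $\varepsilon=1/\sqrt t$ (diffusive). To push the $\TOps$-norm estimate into the trace I would first treat $\rho_0$ with finitely supported position-space matrix, where $\vert\tr(\rho_0 B)\vert$ is controlled by $\Norm{B}$ on $\TOps$, and then extend to general trace-class $\rho_0$ by density, using that $\rho_0\mapsto C_{t,\varepsilon}(\lambda)$ is $1$-Lipschitz in trace norm uniformly in $t$ and $\lambda$; since the limit comes out independent of $\rho_0$, the extension is automatic.

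The first two orders have already been computed in \eqref{Eq:FirstOrderFormula}--\eqref{Eq:SecondOrderFinal}. From \eqref{Eq:FirstOrder} with $\Walk'(\idty)(p)=\ii\Lambda$ one reads $\mu'=\Scp{\idty}{\Walk'(\idty)}=\ii\lambda\cdot\tilde v$, with $\tilde v=(\dim\KK)^{-1}\sum_i v_i=(\dim\KK)^{-1}\ind S$, so in ballistic scaling $\mu_{1/t}^t\to\exp^{\ii\lambda\cdot\tilde v}$, the characteristic function of the point mass at $\tilde v$; the continuity theorem for characteristic functions (L\'evy) then gives $Q/t\to\tilde v$ in distribution, which is (i). For (ii), the hypothesis $\tilde v=0$ makes $\mu'=0$ for all $\lambda$, so diffusive scaling $\varepsilon=1/\sqrt t$ applies; analyticity gives $\mu_\varepsilon=1+\tfrac12\mu''\varepsilon^2+O(\varepsilon^3)$ with remainder uniform near $0$, hence $\mu_{1/\sqrt t}^t=(1+\tfrac{\mu''}{2t}+O(t^{-3/2}))^t\to\exp^{\mu''/2}$. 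Since $\vert\mu_\varepsilon\vert\le1$ forces $\vert C(\lambda)\vert\le1$, the quadratic form $\mu''$ in $\lambda$ is negative semidefinite; writing $\mu''=-\lambda^T D\lambda$ with $D\ge0$ we get $C(\lambda)=\exp^{-\frac12\lambda^T D\lambda}$, the characteristic function of a centered Gaussian with covariance $D$, and L\'evy's theorem again yields convergence of $Q/\sqrt t$ in distribution.

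Finally, the explicit $D$ follows by pairing the second-order relation \eqref{Eq:SecondOrder} with $\idty$: the normalization kills the $A''$ and $2\mu'A'$ terms, leaving $\mu''=\Scp{\idty}{\Walk''(\idty)}+2\Scp{\idty}{\Walk'(A')}$ with $\Walk''(\idty)(p)=-\Lambda^2$ and $\Walk'(A')(p)=\Walk(A')(p)\cdot\ii\Lambda$. As $\mu'=0$ is equivalent to $\tr_\KK\Lambda=0$, i.e.\ $\Lambda\in\{\idty\}^\perp$, where $\Walk-\mathrm{id}$ is invertible by non-degeneracy, \eqref{Eq:FirstOrder} gives $A'=-\ii(\Walk-\mathrm{id})^{-1}(\Lambda)$; substituting produces \eqref{Eq:SecondOrderFinal}, and writing $\Lambda=\sum_\alpha\lambda_\alpha\Lambda_\alpha$, extracting the coefficient of $\lambda_\alpha\lambda_\beta$ and symmetrizing in $\alpha\leftrightarrow\beta$ yields \eqref{Eq:CovarianceMatrix}--\eqref{Eq:Ralphabeta}. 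I expect the only genuinely delicate step to be the limit interchange in \eqref{Eq:CharFuncEps}: the perturbative part is purely algebraic once analyticity is in hand, and the contraction bounds of Proposition~\ref{Prop:BasicProp} are exactly what make $\Norm{\idty-A_\varepsilon}$ a $t$-uniform error and keep $\vert\mu_\varepsilon\vert\le1$; it is only carrying the estimate to the trace against an arbitrary trace-class $\rho_0$, rather than a localized one, that calls for the small density argument above.
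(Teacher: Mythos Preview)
Your proposal is correct and follows essentially the same route as the paper: the theorem is explicitly presented there as a summary of Section~\ref{sec:generaltheory}, and you assemble precisely the ingredients the paper develops---Proposition~\ref{Prop:BasicProp} plus Theorem~\ref{thm:KatoRellich} for analyticity and non-degeneracy, the splitting $\idty=A_\varepsilon+(\idty-A_\varepsilon)$ with the contraction bound to reduce to $\lim_t\mu_\varepsilon^t$, and the first/second-order computations \eqref{Eq:FirstOrderFormula}--\eqref{Eq:SecondOrderFinal} followed by the symmetrization in $\alpha,\beta$.

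Two small differences worth flagging. First, you deduce $D\ge0$ from $\vert\mu_\varepsilon\vert\le1$; the paper instead proves this separately (Proposition following the theorem) via the identity $-\mu''=\Norm{A'}^2-\Norm{\Walk(A')}^2$, using that $A'$ is skew-hermitian. Your argument gives $\mathrm{Re}\,\mu''\le0$ directly, but you still need $\mu''\in\Reals$ to conclude that $C(\lambda)$ is a genuine Gaussian characteristic function; the paper's computation makes reality manifest. Second, you are more careful than the paper about the limit interchange in \eqref{Eq:CharFuncEps} (L\'evy continuity, the density argument for general trace-class $\rho_0$); the paper simply asserts these steps.
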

Of course, for the asymptotic position distribution in diffusive scaling to be well-defined it is necessary that the covariance matrix $D$ is positive. This is indeed the case, which we prove now.
\begin{prop}
Given the assumptions of theorem \ref{thm:MainThm} the covariance matrix $D$ is positive, i.e., $-\mu''\geq 0$ for all $\lambda$.
\end{prop}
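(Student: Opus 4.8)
The plan is to read off the sign of $\mu''$ from the contractivity that Proposition~\ref{Prop:BasicProp} already provides, rather than from the explicit formula \eqref{Eq:SecondOrderFinal}. Fix $\lambda\in\Reals^s$. For real $\varepsilon$ one has $\lvert e^{\ii\varepsilon\lambda\cdot v_i}\rvert=1$ for every $i$, so part~ii) of Proposition~\ref{Prop:BasicProp} gives $\Norm{\Walk_\varepsilon}_{op}\le1$; since, by Theorem~\ref{thm:KatoRellich}, $A_\varepsilon\in\TOps$ is a genuine eigenvector of $\Walk_\varepsilon$ with eigenvalue $\mu_\varepsilon$ for all small real $\varepsilon$, this forces $\lvert\mu_\varepsilon\rvert\le1$. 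Inserting the analytic expansion $\mu_\varepsilon=1+\mu'\varepsilon+\tfrac12\mu''\varepsilon^2+O(\varepsilon^3)$ with $\mu'=0$ yields $\lvert\mu_\varepsilon\rvert^2=1+\mathrm{Re}(\mu'')\,\varepsilon^2+O(\varepsilon^3)\le1$, and letting $\varepsilon\to0$ after dividing by $\varepsilon^2$ gives $\mathrm{Re}(\mu'')\le0$.

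It remains to check that $\mu''$ is in fact real, so that the estimate just obtained reads $-\mu''\ge0$. Here one uses that $\Walk$ is Hermiticity preserving (it is an average of maps $A\mapsto U_\omega^*AU_\omega$ together with the conjugation by $\widetilde U$, composed with the conjugation by the unitary $S$), hence it leaves the real subspace of self-adjoint elements of $\TOps$ invariant, and so do $\Walk-\mathrm{id}$ and its pseudo-inverse on $\{\idty\}^\perp$ (note that the conjugation $A\mapsto A^*$ fixes $\idty$ and hence preserves $\{\idty\}^\perp$). Recalling that in momentum space the adjoint in $\TOps$ is the pointwise matrix adjoint, and that $\Lambda$ — regarded as the constant element $p\mapsto\Lambda$ — is a real diagonal matrix, hence self-adjoint and (when $\mu'=0$) orthogonal to $\idty$, we conclude that $(\Walk-\mathrm{id})^{-1}(\Lambda)(p)$ is self-adjoint for each $p$. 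Then $\tr_\KK\!\bigl[(\Walk-\mathrm{id})^{-1}(\Lambda)(p)\,\Lambda\bigr]$ is the trace of a product of two self-adjoint matrices and therefore real, and together with the manifestly real first term of \eqref{Eq:SecondOrderFinal} this shows $\mu''\in\Reals$.

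Combining the two steps gives $-\mu''\ge0$ for every $\lambda\in\Reals^s$, i.e.\ $D\ge0$, as claimed. The argument is short; the only point requiring care is the legitimacy of the Taylor expansion of $\mu_\varepsilon$ on the infinite-dimensional space $\TOps$, but this is precisely what Theorem~\ref{thm:KatoRellich} together with Proposition~\ref{Prop:BasicProp} already guarantee, so no extra analytic work is needed, and I expect the reality check of $\mu''$ to be the only mildly delicate part. As an alternative one could argue directly from \eqref{Eq:SecondOrderFinal}, writing $(\Walk-\mathrm{id})^{-1}=-\sum_{k\ge0}\Walk^k$ on $\{\idty\}^\perp$ (the series converges in operator norm since some power of $\Walk$ is a strict contraction there) and reorganizing the resulting double sum into minus a variance-type quadratic form; this yields positivity as well, but with more bookkeeping than the contraction argument above.
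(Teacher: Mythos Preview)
Your argument is correct and different from the paper's. The paper proceeds by a direct algebraic identity: using $\Walk'(A)=\ii\,\Walk(A)\Lambda$, $\Walk''(\idty)=-\Lambda^2$, and $\ii\Lambda=A'-\Walk(A')$, it rewrites
\[
-\mu'' \;=\; \Scp{A'}{A'}-\Scp{\Walk(A')}{\Walk(A')}\,,
\]
and then invokes $\Norm{\Walk}_{op}\le1$. Your route is the spectral-radius version of the same contractivity: from $\Norm{\Walk_\varepsilon}_{op}\le1$ for real $\varepsilon$ you infer $|\mu_\varepsilon|\le1$ and read off $\mathrm{Re}\,\mu''\le0$ from the second-order Taylor coefficient; the separate reality check of $\mu''$ via Hermiticity preservation is clean and correct. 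The paper's identity has the advantage of giving an explicit nonnegative expression $-\mu''=\Norm{A'}^2-\Norm{\Walk(A')}^2$, which can be useful if one later wants quantitative lower bounds or strict positivity; your argument is shorter and more conceptual, and transparently generalizes to any analytic perturbation of a contraction through an eigenvalue on the unit circle.
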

\begin{proof}
The proof is similar to, though simpler than, the proof of the positivity of the covariance matrix for the models considered in \cite{timerandom}. According to \eqref{Eq:SecondOrderFormula} we have
\begin{eqnarray*}
-\mu'' &=& -\Scp{\idty}{\Walk''(\idty)} - 2 \Scp{\idty} {\Walk'(A')}\\
&=& \Scp{\Lambda}{\Lambda} + 2 \Scp{\ii\Lambda}{\Walk (A')}\\
&=& \Scp{A' -\Walk(A')}{A' -\Walk(A')} + 2\Scp{A' -\Walk(A')}{\Walk (A')}\\
&=& \Scp{A'}{A'}-\Scp{\Walk(A')}{\Walk (A')}\, ,
\end{eqnarray*}
where we have used $\Walk'(A)(p)=\Walk(A)(p)\cdot \ii\Lambda$, $\Walk''(A)(p)=-\Walk(A)(p)\cdot \Lambda^2$, $\ii\Lambda =A'(p) -\Walk(A')(p)$ and the fact that $A'$ and $\Walk(A')$ are skew-hermitian, which follows from \eqref{Eq:APrime} and the fact that $\Walk(A)^*=\Walk(A^*)$ for arbitrary $A$. According to Proposition \ref{Prop:BasicProp} we have $\Norm{\Walk}_{op}\leq 1$, which proves $-\mu''\geq 0$.
\end{proof}
For practical applications it is usually sufficient to give a good approximation to the covariance matrix $D$. Such an approximation can be obtained by a power series expansion of the pseudo-inverse $(\Walk-\mathrm{id})^{-1}$. Indeed, if there exist $n\in\Naturals$ such that $\Norm{\Walk^n}_{op}<1$ we have the following convergent power series expansion for the pseudo inverse.
\begin{equation}
(\Walk-\mathrm{id} )^{-1} = -\sum_{k\in\Naturals_0}\Walk^k\quad \text{on the subspace}\quad \{\idty\}^\perp
\end{equation}
This gives us the following corollary, which can be used to approximate the numbers $R_{\alpha\beta}$ and the covariance matrix $D$.
\begin{cor}
\label{cor:CovMatPowerSeries}
Given the assumptions of theorem \ref{thm:MainThm} and $\tilde v=0$ we have the convergent power series expression
\begin{equation}
R_{\alpha\beta}(p) = -\frac{1}{2}\sum_{k\in\Naturals_0 }\left(\Walk^k(\Lambda_\alpha)(p)\cdot \Lambda_\beta +\Walk^k(\Lambda_\beta)(p)\cdot \Lambda_\alpha \right)\,.
\end{equation}
\end{cor}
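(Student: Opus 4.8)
The plan is to read Corollary \ref{cor:CovMatPowerSeries} off the definition \eqref{Eq:Ralphabeta} of $R_{\alpha\beta}$ by inserting the operator-norm-convergent Neumann expansion of the pseudo-inverse $(\Walk-\mathrm{id})^{-1}$ recorded just before the statement. The first step is to check that $\Lambda_\alpha$ lies in the subspace $\{\idty\}^\perp$ on which $(\Walk-\mathrm{id})^{-1}$ is defined. Since $\Lambda_\alpha$ is the $p$-independent diagonal matrix $(\Lambda_\alpha)_{ij}=\delta_{ij}v_{i,\alpha}$, its only nonzero Fourier component sits at the origin, so $\Scp{\idty}{\Lambda_\alpha}=\tfrac1{\dim\KK}\tr_\KK\Lambda_\alpha=\tfrac1{\dim\KK}\sum_i v_{i,\alpha}=\tilde v_\alpha$. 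Hence the standing assumption $\tilde v=0$ is exactly the statement that every $\Lambda_\alpha\perp\idty$, and by part iii) of Proposition \ref{Prop:BasicProp} the vector $(\Walk-\mathrm{id})^{-1}(\Lambda_\alpha)$ is well defined.

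The second step is the Neumann series itself. Under the hypothesis of Theorem \ref{thm:MainThm} one has $\Norm{\Walk^n}_{op}<1$ on $\{\idty\}^\perp$ for some $n$, and since $\Norm{\Walk}_{op}\le1$ by Proposition \ref{Prop:BasicProp} a blocking estimate $\Norm{\Walk^{jn+r}}_{op}\le\Norm{\Walk^n}_{op}^{\,j}$ ($0\le r<n$) makes $\sum_{k\ge0}\Norm{\Walk^k}_{op}$ a convergent geometric-type sum; thus $-\sum_{k\ge0}\Walk^k$ converges in operator norm on $\{\idty\}^\perp$, and the telescoping identity $(\Walk-\mathrm{id})\bigl(-\sum_{k=0}^N\Walk^k\bigr)=\mathrm{id}-\Walk^{N+1}\to\mathrm{id}$ identifies the limit with $(\Walk-\mathrm{id})^{-1}$, just as stated before the corollary. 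Applying this to $\Lambda_\alpha$ gives $(\Walk-\mathrm{id})^{-1}(\Lambda_\alpha)=-\sum_{k\ge0}\Walk^k(\Lambda_\alpha)$, convergent in $\TOps$.

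Finally I would substitute into \eqref{Eq:Ralphabeta}. Right multiplication by the fixed bounded translation-invariant operator $\idty\otimes\Lambda_\beta$ is bounded on $\TOps$ (indeed $\Norm{C(\idty\otimes\Lambda_\beta)}\le\Norm{\Lambda_\beta}_{op}\Norm{C}$), hence continuous, so it commutes with the $\TOps$-limit; symmetrizing in $\alpha,\beta$ then yields $R_{\alpha\beta}=-\tfrac12\sum_{k\ge0}\bigl(\Walk^k(\Lambda_\alpha)\,\Lambda_\beta+\Walk^k(\Lambda_\beta)\,\Lambda_\alpha\bigr)$ as a $\TOps$-convergent series. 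The only delicate point — and the closest thing to a genuine obstacle — is that convergence in $\TOps$ controls only the $L^2$-behaviour in $p$, so the pointwise-in-$p$ identity of the statement is to be read as an equality of $L^2$ functions (equivalently, it holds for almost every $p$, e.g.\ along a subsequence of partial sums). This is harmless for the use made of it, since $R_{\alpha\beta}$ enters Theorem \ref{thm:MainThm} only through $\int d^sp\,\tr_\KK R_{\alpha\beta}(p)$, and continuity of $\Scp{\cdot}{\cdot}$ lets the sum pass through that integral. Otherwise the corollary is simply the Neumann series for the resolvent at the boundary of the spectrum, made legitimate by the strict contractivity of a power of $\Walk$.
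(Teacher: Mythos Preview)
Your proof is correct and follows essentially the same approach as the paper: the paper's proof is just the two-line observation that the Neumann series converges via the regrouping $-\sum_{k\ge0}\Walk^k=-\sum_{l=0}^{n-1}\sum_{r=0}^\infty\Walk^{l+rn}$ (your blocking estimate in disguise) together with $\Scp{\idty}{\Lambda}=\tilde v=0$. Your additional remarks on continuity of right multiplication and on $\TOps$-convergence being $L^2$ in $p$ are accurate refinements that the paper leaves implicit.
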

\begin{proof}
Let $n\in\Naturals$ be the smallest natural number such that $\Walk^n$ is strictly contractive on $\{\idty\}^\perp$. The convergence of the Neumann series for $(\Walk-\mathrm{id})^{-1}$ can be seen from
\[
-\sum_{k\in\Naturals_0}\Walk^k = -\sum_{l=0}^{n-1} \sum_{r=0}^\infty  \Walk^{l+r\cdot n}\, .
\]
The assertion follows from $\Scp{\idty}{\Lambda}=\tilde v=0$.
\end{proof}

\section{Examples}\label{sec:examples}

In this section we apply our theory to five models of randomness and calculate their ballistic and diffusive scaling. Continuous as well as discrete distributions on the coin operations are considered and we also include an example in two dimensions.
In all cases the irreducibility condition of proposition \ref{prop:IrredContr} is violated, so we have to look at higher powers of the walk operator $\Walk$ to show simplicity of the eigenvalue $1$.

\subsection{Coins with Zero Mean $\widetilde U$}
In this subsection we assume that $\nu$ is such that $\widetilde U=\int\nu(d\omega) U_\omega =0$. Thus $\Coin(A)=\Coin(A_0)=\int\nu(d\omega)U_\omega^* A_0 U_\omega$, where we again have abbreviated $(2\pi)^{-s}\int dp \,A(p) = A_0$. If the operator $\Walk^n$ is strictly contractive for some $n\in\Naturals$, theorem \ref{thm:MainThm} is applicable and we can look at the second order correction $\mu''$ in order to determine the diffusive scaling of the position distribution. Without loss of generality we assume $\mu'=0$, see remark \ref{rem:SubtractBall}, such that the first order of the eigenvalue problem reads $\Walk(A')(p)-A'(p)= -\ii \Lambda$. Using $\Walk (A)(p) = S^*(p)\Coin (A_0) S(p)$, we get
\[
S^*(p)\Coin (A'_0) S(p)  - A'(p) =-\ii \Lambda
\]
and since $\Lambda$ commutes with $S(p)$, this is equivalent to
\[
\Coin (A'_0) +\ii \Lambda =S(p)A'(p) S^*(p) \, .
\]
Now, since the left-hand-side of this equation is independent of $p$, so is the right-hand-side. Hence, the matrix elements of $A'(p)$ are given by $\ScpWOp{i}{A'(p)}{j}=\widetilde{a}_{ij} \exp^{\ii(v_j-v_i)\cdot p}$ with $\widetilde{a}_{ij}\in\Complex$. Let $\widetilde A'$ denote the matrix with entries $\widetilde a_{ij}$, in other words, $A'(p)=S^*(p)\widetilde A' S(p)$. Then, $\widetilde A'$ can be determined from the $p$-independent equation
\[
\Coin(P(\widetilde A'))-\widetilde A'=-\ii \Lambda\, ,
\]
where $P(\widetilde A')$ is defined as
\[
\ScpWOp{i}{P(\widetilde A')}{j}=\left\{
\begin{array}{rcl}
\widetilde a_{ij} &,& \text{if}\, v_i = v_j\\
0 &,& \text{if}\, v_i\neq v_j
\end{array}
\right.\,.
\]

\subsection{Hadamard Walk with Random Reflections at Lattice Sites}
A simple example to demonstrate our techniques is the Hadamard walk on $\ell^2(\mathbbm Z)\otimes \Complex^2$ which is distorted by random reflections. Similar models have previously been studied numerically in \cite{Romanelli2,PerezRomanelli}. The intuitive picture is that at each time step some links between the lattice points are broken such that the walker cannot pass them but is reflected.
These reflections can be thought of as flips of the internal degree of freedom followed by the usual shift. Hence, the reflections can be implemented by the Pauli operator $\sigma_x$ which is applied with probability $1-w$ for $w\in(0,1)$ at each lattice site.

The basic Hadamard walk is given by the unitary $W_H=S\cdot C$, where the coin $C=\idty\otimes H$ and shift operator $S$ are defined with respect to a basis $\Ket{\pm}$ of $\Complex^2$ by
\begin{equation}
\label{Eq:Hadamard}
H=
\frac{1}{\sqrt{2}}
\left(
\begin{array}{cc}
1 & 1\\
1 & -1
\end{array}
\right)\quad,\quad
S\Ket{x\otimes\pm}=\Ket{x\pm 1\otimes\pm}\, .
\end{equation}
The Pauli operator $\sigma_x$ acts like a bit flip on $\Ket{\pm}$, i.e. $\sigma_x\Ket{\pm}=\Ket{\mp}$. Following the above, the measure $\nu_w$ on $\mathcal U(2)$ is a convex combination of the point measures $\Delta_H$ respectively $\Delta_{\sigma_x}$ supported on $H$ respectively $\sigma_x$.
\begin{equation}
	\nu_w=w\:\Delta_H+(1-w)\:\Delta_{\sigma_x}
\end{equation}
According to equation \eqref{Eq:WalkPosSpace} the walk operator is given by $\Walk (A) = S^* \Coin (A)S$ with shift operator $S$ as in \eqref{Eq:Hadamard} and coin operator
\begin{equation}
\label{Eq:RomanelliCoin}
	\Coin(A)=\sum_{x,y}\Ketbra{x}{y}\otimes\left(\delta_{xy}\left(w\, H A_{xy}H + (1-w)\,\sigma_x A_{xy}\sigma_x\right) +\left(1-\delta_{xy}\right)\widetilde UA_{xy}\widetilde U\right)
\end{equation}
where $\widetilde U=w\,H+(1-w)\,\sigma_x$. To apply our theory to this particular model, the model has to satisfy the conditions of theorem \ref{Prop:BasicProp}. Since the set $\{U_{\omega'}^*U_\omega\}=\{H\sigma_x,\:\sigma_xH\}$ is not irreducible we have to look at higher powers of $\Walk$ in order to ensure the non-degeneracy of the eigenvalue $1$.
\begin{lem}
\label{lem:BrokenLinks}
Let $w\in(0,1)$ and $\Walk=S^* \Coin S$, with $\Coin$ according to \eqref{Eq:RomanelliCoin} and $S$ according to \eqref{Eq:Hadamard}. Then $\Walk^2$ is strictly contractive on $\{\idty\}^\perp$.
\end{lem}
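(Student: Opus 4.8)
The plan is to use the orthogonal splitting $\TOps=\mathcal{T}_0\oplus\mathcal{T}_0^\perp$ of \eqref{Eq:CoinSubspaces} to identify the unique direction of $\{\idty\}^\perp$ on which one application of $\Walk$ preserves the norm, and then to show that $\Walk$ sends that direction into $\mathcal{T}_0^\perp$, where a second application is strictly contracting. Throughout, $\Norm{\Walk(\,\cdot\,)}=\Norm{\Coin(\,\cdot\,)}$ by the proof of Proposition~\ref{Prop:BasicProp}, and $\Coin$ respects the splitting: $\Coin(X)=T(X_0)+\widetilde U^{*}(X-X_0)\widetilde U$ with the first term in $\mathcal{T}_0$ and the second in $\mathcal{T}_0^\perp$, where $T(X_0)=wHX_0H+(1-w)\sigma_xX_0\sigma_x$, $\widetilde U=wH+(1-w)\sigma_x$, and $X_0$ is the momentum average of $X$.

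First I would establish two one-step facts, both variations on the arguments already used for Propositions~\ref{Prop:BasicProp} and~\ref{prop:IrredContr}. \emph{(a)} $\Norm{\widetilde U}_{op}<1$: if $\Norm{\widetilde U\psi}=1$ for some unit $\psi\in\KK$, strict convexity of the triangle inequality in $\KK$ (using $w,1-w\in(0,1)$ and $\Norm{H\psi}=\Norm{\sigma_x\psi}=1$) forces $H\psi=\sigma_x\psi$, i.e. $\psi\in\ker(H-\sigma_x)$, but $\det(H-\sigma_x)=\sqrt2-2\neq0$; hence by \eqref{Eq:NormUStarU} the map $X\mapsto\widetilde U^{*}X\widetilde U$ on $\mathcal{T}_0^\perp$ satisfies $\Norm{\widetilde U^{*}X\widetilde U}\le\Norm{\widetilde U}_{op}^{2}\Norm X<\Norm X$ for every nonzero $X\in\mathcal{T}_0^\perp$. \emph{(b)} On $\mathcal{T}_0$ we have $\Norm T_{op}\le1$, and running the Cauchy--Schwarz chain from the proof of Proposition~\ref{prop:IrredContr} with the two-point measure $\nu_w$ shows that $\Norm{T(X_0)}=\Norm{X_0}$ forces $X_0$ to commute with $H\sigma_x$. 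Since $H\sigma_x$ has the distinct eigenvalues $e^{\pm\ii\pi/4}$, its commutant is $\mathrm{span}\{\idty,H\sigma_x\}$, whose traceless part is the single line $\Complex M_0$ with $M_0:=\Ketbra{+}{-}-\Ketbra{-}{+}$. So inside $\mathcal{T}_0\cap\{\idty\}^\perp$ the coin preserves the norm only on $\Complex M_0$.

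Next I would combine these. Fix $A\in\{\idty\}^\perp$, $A\neq0$. From $\Norm{\Walk}_{op}\le1$ we get $\Norm{\Walk^2(A)}\le\Norm{\Walk(A)}\le\Norm A$, so only the case $\Norm{\Coin(A)}=\Norm A$ is non-trivial. Orthogonality of the splitting gives $\Norm{\Coin(A)}^{2}=\Norm{T(A_0)}^{2}+\Norm{\widetilde U^{*}(A-A_0)\widetilde U}^{2}\le\Norm{A_0}^{2}+\Norm{A-A_0}^{2}=\Norm A^{2}$, and by \emph{(a)} equality forces $A=A_0$ (momentum independent) and $\Norm{T(A_0)}=\Norm{A_0}$; then \emph{(b)} gives $A=\beta M_0$ with $\beta\neq0$. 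Now I compute the step: from the identity $U^{\top}M_0U=\det(U)\,M_0$ for $2\times2$ matrices together with $H^{\top}=H$, $\sigma_x^{\top}=\sigma_x$ and $\det H=\det\sigma_x=-1$, one gets $HM_0H=\sigma_xM_0\sigma_x=-M_0$, hence $\Coin(A)=T(\beta M_0)=-\beta M_0$ and $\Walk(A)(p)=-\beta\,S(p)^{*}M_0S(p)$. Since $M_0$ is off-diagonal in the basis $\Ket{\pm}$ while $S(p)=\mathrm{diag}(e^{\ii p},e^{-\ii p})$, conjugation by $S(p)$ turns this into $\Walk(A)(p)=-\beta\bigl(e^{-2\ii p}\Ketbra{+}{-}-e^{2\ii p}\Ketbra{-}{+}\bigr)$, whose momentum average vanishes; thus $\Walk(A)\in\mathcal{T}_0^\perp$, and $\Walk(A)\neq0$ because $\Norm{\Walk(A)}=\Norm A\neq0$. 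Applying \emph{(a)} once more, $\Norm{\Walk^2(A)}=\Norm{\Coin(\Walk(A))}=\Norm{\widetilde U^{*}\,\Walk(A)\,\widetilde U}<\Norm{\Walk(A)}=\Norm A$, which is the claim.

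The numerical checks ($\det(H-\sigma_x)\neq0$ and the spectrum of $H\sigma_x$) and the block bookkeeping are routine. The step I expect to be the real obstacle is the last one: that the single non-contracting direction $M_0$ of $\mathcal{T}_0\cap\{\idty\}^\perp$ is mapped by $\Walk$ entirely into $\mathcal{T}_0^\perp$ (equivalently, $S(p)^{*}M_0S(p)$ has vanishing momentum average). This is exactly what makes the second step strictly contractive, and it is why $\Walk$ alone need not be.
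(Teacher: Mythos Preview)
Your proof is correct. Both your argument and the paper's rest on the same scaffolding: the orthogonal splitting $\TOps=\mathcal{T}_0\oplus\mathcal{T}_0^\perp$, the observation $\Norm{\Walk(\cdot)}=\Norm{\Coin(\cdot)}$, and the strict bound $\Norm{\widetilde U}_{op}<1$ (which you obtain via $\det(H-\sigma_x)=\sqrt2-2\neq0$, the paper via the absence of a common eigenvector of $H$ and $\sigma_x$). Where the two arguments diverge is in the finite-dimensional endgame on $\mathcal{T}_0\cap\{\idty\}^\perp$.

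You invoke the Cauchy--Schwarz equality case of Proposition~\ref{prop:IrredContr} to pin down the unique non-contracting line of $T$, namely the commutant of $H\sigma_x$ modulo $\idty$, which is $\Complex M_0=\Complex(\ii\sigma_y)$; you then show $\Walk(M_0)\in\mathcal{T}_0^\perp$ (since $M_0$ is off-diagonal in the $\Ket{\pm}$ basis), so the second application of $\Walk$ contracts strictly. The paper, by contrast, argues by contradiction: assuming $\Norm{\Walk^2(A)}=\Norm{A}$, it uses that both $A$ and $\Walk(A)$ must lie in $\mathcal{T}_0$, expands $A=a_x\sigma_x+a_y\sigma_y+a_z\sigma_z$, and reads off from the explicit Pauli expression for $\Walk(A)$ that $\Walk(A)\in\mathcal{T}_0$ forces $a_y=0$ and $wa_z+(1-w)a_x=0$; on this one-parameter family one then checks $\Norm{\Walk(A)}^2\le(w^2+(1-w)^2)\Norm{A}^2<\Norm{A}^2$ directly. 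So the two proofs exploit \emph{different} necessary conditions for $\Norm{\Walk^2(A)}=\Norm{A}$ (you use $\Norm{T(A_0)}=\Norm{A_0}$, the paper uses $\Walk(A)\in\mathcal{T}_0$) and land on different one-dimensional families before reaching the same conclusion. Your route is the more structural one, recycling the irreducibility machinery already set up in the paper; the paper's is a hands-on computation in the Pauli basis that avoids the commutant argument entirely.
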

\begin{proof}
We decompose the coin operator again into diagonal part $T$ and conjugation with $\widetilde U$, i.e.,
\[
\Coin (A) = T(A_0) + \widetilde U^*(A-A_0)\widetilde U\, .
\]
The operator $\widetilde U=w\,H+(1-w)\,\sigma_x$ is a convex combination of two unitary irreducible operators. Since $\widetilde U$ is also hermitian we get an estimate for its operator norm by considering its eigenvalues. It is easy to see that $\Norm{\widetilde U}_{op}=1$ implies the existence of a common eigenvector of $H$ and $\sigma_x$, which is a contradiction to their irreducibility. Hence, $\Norm{\widetilde U}_{op}<1$ for all $w\in(0,1)$.

The diagonal part $T$ satisfies the eigenvalue equation $T(\sigma_y)=-\sigma_y$, hence, $\Norm{T}_{op}=1$. Similar to the proof of proposition \ref{Prop:BasicProp} we define the orthogonal subspaces $\mathcal{T}_0$ and $\mathcal{T}_0$ according to \eqref{Eq:CoinSubspaces}. Note that $\Coin (\mathcal{T}_0)\subset \mathcal{T}_0$ and $\Coin (\mathcal{T}_0^\perp)\subset \mathcal{T}_0^\perp$. It is easy to verify the relations $\Norm{\Walk^2(A)}=\Norm{\Coin(\Walk(A))}$ and $\Norm{\Coin(A)}^2=\Norm{\Coin(A_0)}^2+\Norm{\Coin(A-A_0)}^2$, hence, the existence of $A\perp\idty$ with $\Norm{\Walk^2 (A)}=\Norm{A}$ implies $A,\Walk(A)\in\mathcal{T}_0$. Writing $A$ as a linear combination of the Pauli operators $A=a_x\sigma_x+a_y\sigma_y+a_z\sigma_z$ we get the relation
\begin{equation}
\label{Eq:PauliContractive}
\Walk( A ) = w\, (a_x \sigma_z  - a_y S^*\sigma_yS + a_z S^*\sigma_x S)+(1-w)(a_x S^*\sigma_x S - a_yS^*\sigma_y S - a_z\sigma_z) \, .
\end{equation}
Hence, $\Walk(A)\in\mathcal{T}_0$ requires $a_y=0$ and $w\,a_z +(1-w) \,a_x=0$. Consequently, we have $\Walk(A)=(w\,a_x-(1-w)\,a_z)\sigma_z$ and therefore $\Norm{\Walk(A)}^2=\vert w\,a_x-(1-w)\,a_z\vert^2\leq(w^2+(1-w)^2)\cdot(\vert a_x\vert^2+\vert a_z\vert^2)<\vert a_x\vert^2+\vert a_z\vert^2=\Norm{A}^2$, which proves the assertion.
\end{proof}
According to \eqref{Eq:FirstOrderFormula} the ballistic order vanishes, i.e. $\mu'=0$. The diffusive order can then be derived from equation \eqref{Eq:SecondOrderFinal} and corollary \ref{cor:CovMatPowerSeries}. The result is depicted in figure \ref{fig:BrokenLinkDiffusionConst} and shows that for the completely localized walk, i.e. $w=0$, the position probability becomes a point measure at the origin whereas for the usual Hadamard walk the diffusion constant $D$ diverges. The red points in the figure correspond to an approximation to the diffusion constant by calculating the variance of the position probability distributions after 100 time steps. For $w\leq 0.6$ the variance data points are in good agreement with the diffusion constant, for larger $w$, however, there is a clearly visible deviation of finite time variance and diffusion constant. This is the fingerprint of a generic behavior: when approaching a ballistic quantum walk, here $w\rightarrow 1$, the time scale, at which the crossover from ballistic to diffusive behavior happens, diverges. Hence, approximations with a fixed number of time steps become worse and worse in the limit $w\rightarrow 1$.

The plot in figure \ref{fig:BrokenLinkDiffusionConst} also shows a comparison of the diffusion constant and the guess $w/(1-w)$ by Romanelli \textit{et al.} for the model considered in \cite{Romanelli2}. The diffusion constant for our example shows a similar functional dependence on $w$, though there are small deviations.

\begin{figure}[htb]
\begin{center}
	\includegraphics[width=8cm]{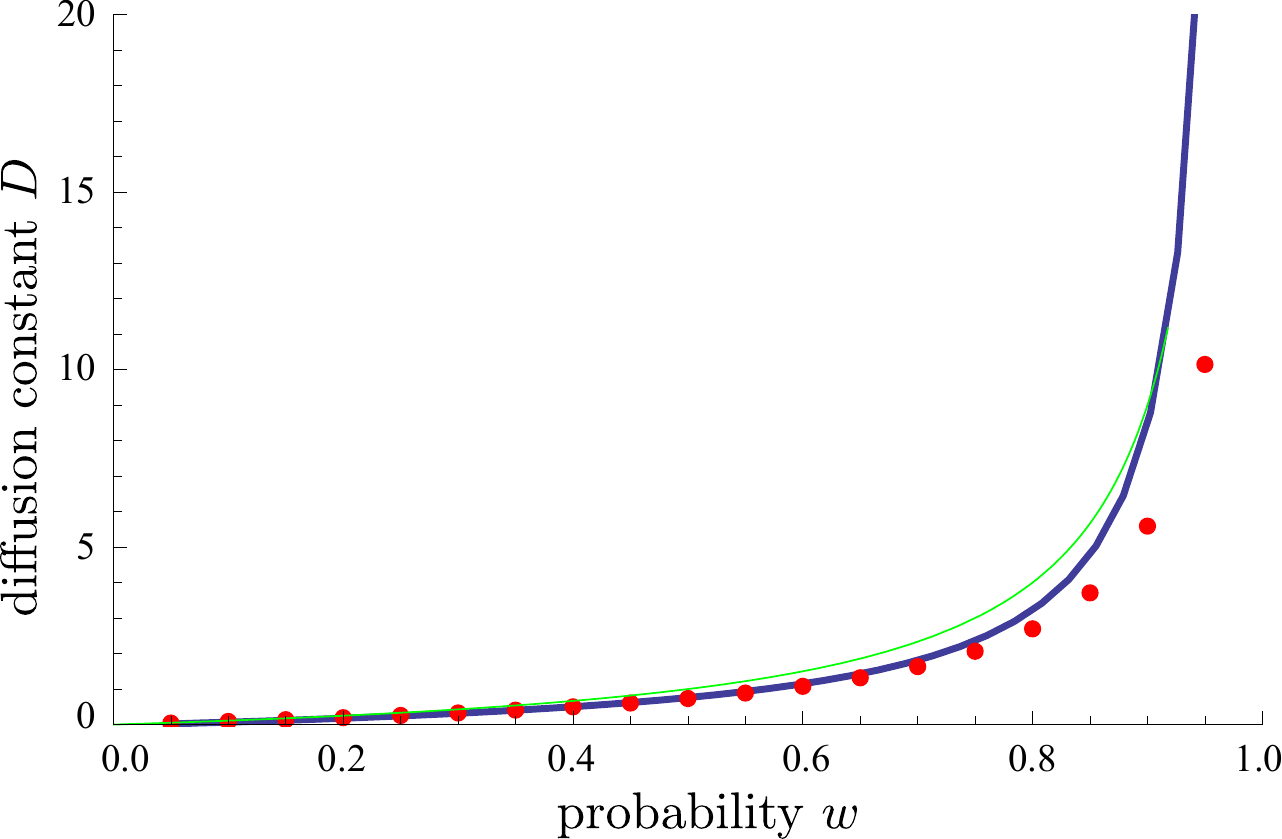}
\end{center}
\caption{The diffusion constant of the one-dimensional quantum walk with random reflections as a function of the probability $w$. The blue curve shows an approximation of the diffusion constant evaluated according to corollary \ref{cor:CovMatPowerSeries} for different values of $w$. The red dots show the variances of the position distribution for $t=100$ and the green line shows the guess $D=w/(1-w)$ by Romanelli \textit{et al.} for a similar model considered in \cite{Romanelli2}.}
\label{fig:BrokenLinkDiffusionConst}
\end{figure}

\subsection{Hadamard Walk with Dephasing}
The Hadamard walk with dephasing is given by a modification of the local coin operator $H$ in $W_H=S\cdot \idty\otimes H$, with $S$ and $H$  according to \eqref{Eq:Hadamard}. We modify the coin $H$ by an additional random relative phase shift between the states $\Ket{+}$ and $\Ket{-}$
\[
H_\varphi  =
\frac{1}{\sqrt{2}}
\left(
\begin{array}{cc}
\exp^{\ii \varphi} & \exp^{\ii \varphi}\\
\exp^{-\ii \varphi} & -\exp^{-\ii \varphi}
\end{array}
\right)\, .
\]
Note that, by introducing the Pauli operators $\sigma_x,\sigma_y$ and $\sigma_z$, we can write $H_\varphi=\exp^{\ii\varphi \sigma_z}H$. This phase shift is assumed to happen independently and identically in time and space, in other words, the phase $\varphi$ is chosen for each lattice site and in each time step according to a fixed probability measure $\nu$ on $[-\pi,\pi)$. In Fourier space we have the following expression for the Hadamard walk with dephasing:
\begin{equation}
\label{Eq:DephasingHadamard}
 \Walk (A)(p)=S^*(p)\Coin (A) S(p)\quad,\quad \Coin(A)=\int\nu(d\varphi)H\exp^{-\ii \varphi \sigma_z}\cdot A_0\cdot \exp^{\ii \varphi \sigma_z}H +\widetilde H^*(A-A_0) \widetilde H
\end{equation}
Here, $\widetilde H$ denotes the operator $\widetilde H=\int \nu(d\varphi) \,\exp^{\ii \varphi \sigma_z} H$. In the following we set $r_n=\vert\int\nu(d\varphi)\exp^{\ii\varphi n}\vert$ and $\theta_n=\arg\int\nu(d\varphi)\exp^{\ii\varphi n}$ and note that $\Coin$ only depends on $r_n$ and $\theta_n$ for $n=1,2$ and not on the actual form of the distribution $\nu(d\varphi)$, see \eqref{Eq:DephCoin1} and \eqref{Eq:DephCoin2}.

It is easy to see that
\[
H_{\varphi'}^*H_{\varphi}=H\exp^{\ii (\varphi-\varphi') \sigma_z}H=\exp^{\ii (\varphi-\varphi') \sigma_x}=\left(
\begin{array}{cc}
\cos (\varphi-\varphi') & \ii \sin (\varphi-\varphi')\\
\ii \sin (\varphi-\varphi') & \cos (\varphi-\varphi')
\end{array}
\right)\, ,
\]
which shows that the set of matrices $H_{\varphi'}^*H_\varphi$ is reducible. Nonetheless, we can apply theorem \ref{thm:MainThm} after we prove that $\Walk^2$ is strictly contractive on $\{\idty\}^\perp$.
\begin{lem}
Let $\Walk$ be according to \eqref{Eq:DephasingHadamard} and $\nu$ such that $r_1,r_2<1$. Then $\Walk^2$ is strictly contractive on $\{\idty\}^\perp$.
\end{lem}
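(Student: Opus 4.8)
The plan is to follow the strategy used in the proof of Lemma~\ref{lem:BrokenLinks}. As there, write the coin operator from \eqref{Eq:DephasingHadamard} as $\Coin(A)=T(A_0)+\widetilde H^*(A-A_0)\widetilde H$ with diagonal part $T(A)=\int\nu(d\varphi)\,He^{-\ii\varphi\sigma_z}Ae^{\ii\varphi\sigma_z}H$. Then $\Coin$ leaves the orthogonal subspaces $\mathcal{T}_0$ and $\mathcal{T}_0^\perp$ of \eqref{Eq:CoinSubspaces} invariant, so that $\Norm{\Coin(A)}^2=\Norm{T(A_0)}^2+\Norm{\widetilde H^*(A-A_0)\widetilde H}^2$, and since $S$ is unitary also $\Norm{\Walk^2(A)}=\Norm{\Coin(\Walk(A))}$. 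The goal is to show that the only way $\Coin$ (hence $\Walk$) can preserve the norm of some $A\perp\idty$ is $A\propto\sigma_z$, and that this surviving candidate is moved out of $\mathcal{T}_0$ by the shift, so that a \emph{second} application of $\Walk$ strictly contracts it.

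Next I would treat the two pieces of $\Coin$. Using $He^{\ii\alpha\sigma_z}H=e^{\ii\alpha\sigma_x}$ one computes $\widetilde H^*\widetilde H=\int\!\!\int\nu(d\varphi)\,\nu(d\varphi')\,e^{\ii(\varphi-\varphi')\sigma_x}=r_1^2\,\idty$, since the $\sigma_x$-part of the integrand is odd under $\varphi\leftrightarrow\varphi'$. Hence $\Norm{\widetilde H}_{op}=r_1<1$ and, by \eqref{Eq:NormUStarU}, the map $A\mapsto\widetilde H^*A\widetilde H$ is a strict contraction on $\mathcal{T}_0^\perp$. For the diagonal part, restrict to traceless $A=a_x\sigma_x+a_y\sigma_y+a_z\sigma_z$: conjugation by $e^{-\ii\varphi\sigma_z}$ rotates the $(\sigma_x,\sigma_y)$-plane and fixes $\sigma_z$, while $X\mapsto HXH$ sends $\sigma_x\mapsto\sigma_z$, $\sigma_y\mapsto-\sigma_y$, $\sigma_z\mapsto\sigma_x$; averaging over $\nu$ yields $\Norm{T(A)}^2=r_2^2(|a_x|^2+|a_y|^2)+|a_z|^2$, so with $r_2<1$ one has $\Norm{T(A)}\le\Norm{A}$ with equality precisely when $a_x=a_y=0$, i.e.\ $A=a_z\sigma_z$.

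Now suppose, for contradiction, that $\Norm{\Walk^2(A)}=\Norm{A}$ for some nonzero $A\perp\idty$. Because $\Norm{\Walk}_{op}\le1$ (Proposition~\ref{Prop:BasicProp}) this forces $\Norm{\Walk(A)}=\Norm{A}$ and $\Norm{\Coin(\Walk(A))}=\Norm{\Walk(A)}$, and since $S$ is unitary also $\Norm{\Coin(A)}=\Norm{A}$. Applying the decomposition above to $B=A$ and to $B=\Walk(A)$ in turn, equality in
\[
\Norm{\Coin(B)}^2=\Norm{T(B_0)}^2+\Norm{\widetilde H^*(B-B_0)\widetilde H}^2\le\Norm{B_0}^2+r_1^2\Norm{B-B_0}^2\le\Norm{B}^2
\]
together with $r_1<1$ forces $B=B_0$ (so $B\in\mathcal{T}_0$) and $\Norm{T(B_0)}=\Norm{B_0}$. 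Thus $A,\Walk(A)\in\mathcal{T}_0$ and $\Norm{T(A)}=\Norm{A}$, whence $A=a_z\sigma_z$ by the previous step. But then $\Coin(A)=T(A)=a_z\sigma_x$ and $\Walk(A)(p)=S^*(p)\,a_z\sigma_x\,S(p)$ is purely off-diagonal in $p$, hence has vanishing $p$-average and lies in $\mathcal{T}_0^\perp$. Combined with $\Walk(A)\in\mathcal{T}_0$ this gives $\Walk(A)=0$, so $a_z=0$ and $A=0$, contradicting $A\neq0$. Therefore $\Norm{\Walk^2(A)}<\Norm{A}$ for all nonzero $A\perp\idty$.

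The point to watch — and the reason one power of $\Walk$ does not suffice — is that the diagonal part $T$ does not damp the $\sigma_z$-mode at all ($T\sigma_z=\sigma_x$), so the whole argument hinges on the observation that the shift $S$ carries $\Walk(\sigma_z)$ into $\mathcal{T}_0^\perp$, where the next application of $\Coin$ contracts it by the factor $r_1^2<1$. Accordingly $r_2<1$ is used to eliminate the $\sigma_x,\sigma_y$ components at the first step and $r_1<1$ to kill the surviving (now off-diagonal) $\sigma_z$ piece at the second; the only routine input is the explicit Pauli-basis evaluation of $\widetilde H^*\widetilde H$ and of $T$ on the traceless subspace.
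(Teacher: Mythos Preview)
Your argument is correct and follows essentially the same route as the paper: both show that conjugation by $\widetilde H$ is strictly contractive on $\mathcal{T}_0^\perp$ (via $r_1<1$), that $T$ strictly damps the $\sigma_x,\sigma_y$ components (via $r_2<1$), and that the only surviving $p$-independent direction $\sigma_z$ is mapped by $T$ to $\sigma_x$ and hence carried by the shift into $\mathcal{T}_0^\perp$. The paper presents this more tersely by writing out the $4\times4$ matrix representations \eqref{Eq:DephCoin1}--\eqref{Eq:DephCoin2} of the two dephasing maps in the Pauli basis, while you spell out the contradiction argument explicitly; one minor slip is that your displayed inequality should have $r_1^4\Norm{B-B_0}^2$ rather than $r_1^2\Norm{B-B_0}^2$ (since $\Norm{\widetilde H^* X\widetilde H}\le\Norm{\widetilde H}_{op}^2\Norm{X}$), but this does not affect the conclusion.
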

\begin{proof}
We write the coin operator as $\Coin(A) =T(A_0) +\widetilde H^* (A-A_0)\widetilde H$ and express the dephasing in the maps $T$ and conjugation by $\widetilde H$ in the basis $\{\idty,\sigma_x,\sigma_y,\sigma_z\}$ of Pauli operators. The dephasing in $T$ corresponds to the matrix representation
\begin{equation}
\label{Eq:DephCoin1}
A\mapsto \int\nu(d\varphi)\exp^{-\ii \varphi \sigma_z}\cdot A\cdot \exp^{\ii \varphi \sigma_z} \quad\longrightarrow\quad
\left(
\begin{array}{cccc}
1 & 0 & 0 & 0\\
0 & r_2 \cos \theta_2 & -r_2 \sin \theta_2 &0\\
0 & r_2 \sin \theta_2 & r_2 \cos \theta_2 &0\\
0 & 0 & 0 & 1
\end{array}
\right)
\end{equation}
and the dephasing in the conjugation by $\widetilde H$ has matrix representation
\begin{equation}
\label{Eq:DephCoin2}
A\mapsto \int\nu(d\varphi)\exp^{-\ii \varphi \sigma_z}\cdot A\cdot \int\nu(d\varphi')\exp^{\ii \varphi' \sigma_z} \quad\longrightarrow\quad
r_1^2\cdot\left(
\begin{array}{cccc}
1 & 0 & 0 & 0\\
0 &  \cos 2\theta_1 & - \sin 2\theta_1 &0\\
0 &  \sin 2\theta_1 &  \cos 2\theta_1 &0\\
0 & 0 & 0 & 1
\end{array}
\right)\, .
\end{equation}
Clearly, if $r_1<1$ the conjugation by $\widetilde H$ is strictly contractive. If $r_2<1$ we have that $T$ is strictly contractive on the subspace spanned by $\sigma_x$ and $\sigma_y$. The operator $\sigma_z$ is mapped to $\sigma_x$ by $T$, and hence, $S^*T(\sigma_z)S\in\mathcal{T}_0^\perp$, which proves $\Norm{\Walk^2(A)}<\Norm{A}$ for all $A\perp\idty$.
\end{proof}
Since $\ind\,S=0$, the ballistic scaling yields a point measure at the origin as asymptotic position distribution. For the asymptotic position distribution in diffusive scaling we need to solve the equation
\[
 \Walk (A')(p)-A'(p)=
 \left(
 \begin{array}{cc}
 -\ii & 0 \\
 0 & \ii
 \end{array}
 \right)\, ,
\]
which we do by numeric approximation according to corollary \ref{cor:CovMatPowerSeries}.
\begin{figure}
\includegraphics[width=16cm]{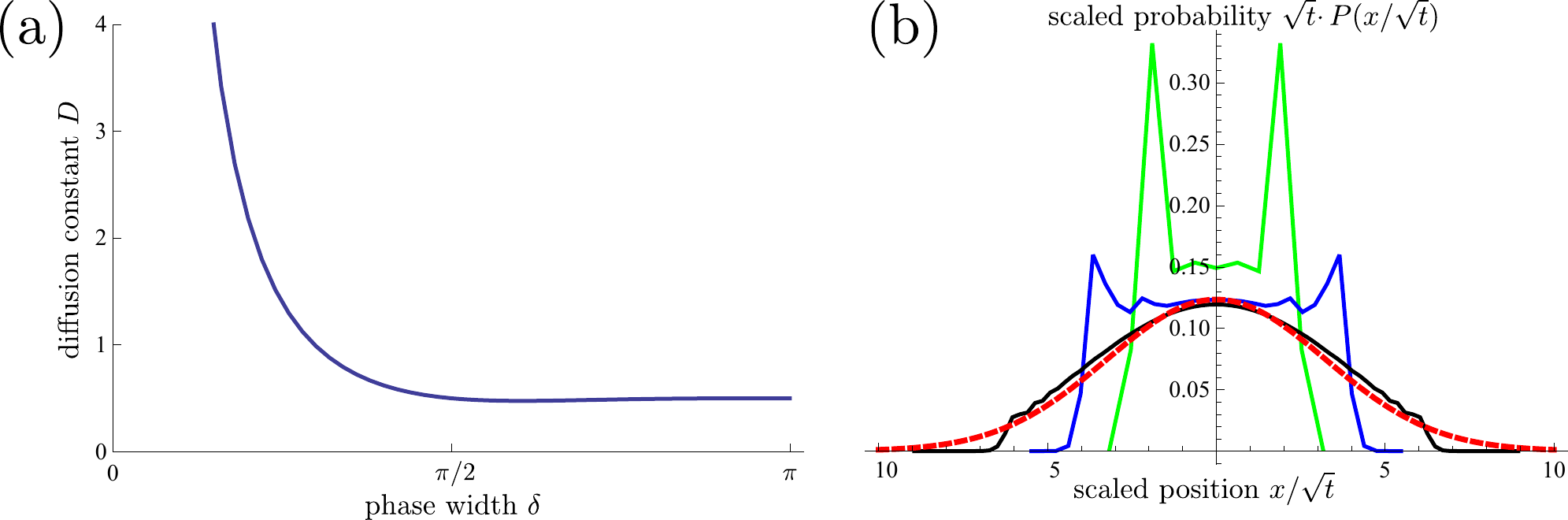}
\caption{(a) shows a plot of the diffusion constant $D$ depending on the width $\delta$ of the interval $I_\delta$ from which the phase $\varphi$ is chosen with uniform probability. For the plot shown in (b) we fixed $\delta=\pi/8$ and computed the position distribution after 10 (green), 30 (blue), and 80 (black) time steps with initial state $\rho_0=2^{-1}\idty$. For better comparison with the asymptotic position distribution (red-dashed line) the average over two neighboring lattice sites was computed such that the finite time step position distributions are non-zero everywhere.}
\label{Fig:DiffDephase}
\end{figure}
For concreteness let us consider a family of measures $\nu_\delta$ on $[-\pi,\pi)$, indexed by $\delta\in [0,\pi)$, and defined via
\[
\nu_\delta( d\varphi)=\frac{1}{2\delta}\chi_{I_\delta}(\varphi) d\varphi\,.
\]
Here, $\chi_{I_\delta}$ denotes the characteristic function of the set $I_\delta=[-\delta,\delta)$ and $d\varphi$ the Lebesgue measure. The measures $\nu_\delta$ represent uniform distributions of $\varphi$ on $I_\delta$ and figure \ref{Fig:DiffDephase} shows a plot of the diffusion constant and a comparison of the position distribution for a finite number of time steps with the asymptotic position distribution.

\subsection{Quantum Walk with Continuous Coin Distribution}

One of the goals of this paper is to take into account fluctuations of the coin operators in space and time as they occur in experiments. However, in most examples up to now, discrete distributions of the coin operators have been studied. For a model of realistic noise sources in experiments it seems to be more appropriate to assume a continuous distribution around some specific coin operation. Let us therefore consider a quantum walk on $\ell^2(\Integers )\otimes \Complex^2$ with the shift according to \eqref{Eq:Hadamard} and coin operator constructed from the following $2\pi$-periodic family of unitary matrices
\begin{equation}
	U_r=\begin{pmatrix}\cos r&\sin r\\\sin r&-\cos r\end{pmatrix} \, ,
\end{equation}
for which $U_{\pi/4}=H$, is the Hadamard coin \eqref{Eq:Hadamard}, whereas $U_{\pi/2}=\sigma_x$ and $U_0=\sigma_z$.
Taking on the idea of random fluctuations around some target coin operator, we consider
a gaussian probability distribution on $I(r_0)= [ r_0-\pi,r_0+\pi]$ around some point $r_0\in\Reals$.
The measure on the parametrization space is then given by
\begin{equation}
	\nu(X)=\frac{1}{N}\int\limits_{X} e^{-\frac{(r-r_0)^2}{\sigma^2}}\:dr ,\qquad
\label{eq:GaussianMeasure}
\end{equation}
where $X$ is a Borel set on $I(r_0)$, $dr$ is the usual Lebesgue measure and the normalization $N$ is chosen such that the measure is normalized on $I(r_0)$.
Now we can write the coin operator as
\begin{equation}
\label{Eq:GaussCoin}
	\Coin(A)	=\sum_{x,y}\Ketbra{x}{y}\otimes\left(\delta_{x,y}\int\limits_{I(r_0)} \frac{1}{N}dr\:e^{-\frac{(r-r_0)^2}{\sigma^2}}U_r^*A_{xy}U_r+(1-\delta_{x,y})\widetilde U^*A_{xy}\widetilde U\right),
\end{equation}
where
\begin{equation*}
	\widetilde U=\int\limits_{I(r_0)}\nu(dr)U_r=\int\limits_{I(r_0)} \frac{1}{N}dr\:e^{-\frac{(r-r_0)^2}{\sigma^2}}U_r.
\end{equation*}
A straightforward calculation shows $[U_r,U_{r'}]=-2i\sin(r-r')\sigma_y$, hence $\{U_r\}$ is irreducible whereas the set $\{U_{r'}^*U_r\}$ is reducible. Nonetheless, $\Walk^2$ is strictly contractive on $\{\idty\}^\perp$, as the following lemma shows.
\begin{lem}
Let $\Walk=S^*\Coin S$ be a quantum walk with $\Coin$ according to \eqref{Eq:GaussCoin} and $S$ according to \eqref{Eq:Hadamard}, then $\Walk^2$ is strictly contractive on $\{\idty\}^\perp$.
\end{lem}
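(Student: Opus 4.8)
The plan is to recycle the structure of the proof of Lemma~\ref{lem:BrokenLinks} almost verbatim, replacing only the model-specific part. Write the coin operator as $\Coin(A)=T(A_0)+\widetilde U^*(A-A_0)\widetilde U$ with $T(A)=\int\nu(dr)\,U_r^*AU_r$, and decompose $\TOps=\mathcal T_0\oplus\mathcal T_0^\perp$ as in \eqref{Eq:CoinSubspaces}. Since $S$ is unitary and $\Coin$ preserves this splitting, one has, exactly as there, $\Norm{\Walk^2(A)}=\Norm{\Coin(\Walk(A))}$ and $\Norm{\Coin(B)}^2=\Norm{\Coin(B_0)}^2+\Norm{\Coin(B-B_0)}^2$. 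Using $U_r=\cos r\,\sigma_z+\sin r\,\sigma_x$ one gets $\widetilde U=\int\nu(dr)\,U_r=a\,\sigma_z+b\,\sigma_x$ with $a=\int\nu(dr)\cos r$, $b=\int\nu(dr)\sin r$ and $a^2+b^2=\bigl|\int\nu(dr)e^{\ii r}\bigr|^2<1$, because the Gaussian density \eqref{eq:GaussianMeasure} is not a point mass; hence $\Norm{\widetilde U}_{op}<1$ and, by \eqref{Eq:NormUStarU}, conjugation by $\widetilde U$ is strictly contractive on $\mathcal T_0^\perp$. Combining these facts the way the proof of Lemma~\ref{lem:BrokenLinks} does, $\Norm{\Walk^2(A)}=\Norm{A}$ with $A\perp\idty$ forces $A=A_0\in\mathcal T_0$, $\Norm{T(A)}=\Norm{A}$, and in addition $\Walk(A)=S^*(\cdot)\,T(A)\,S(\cdot)\in\mathcal T_0$.

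The next and only genuinely new step is to identify the norm-one directions of $T$ on the traceless part. Viewing $U_r$ as the reflection along the unit vector $(\sin r,0,\cos r)$, a short spin computation gives $U_r\sigma_xU_r=-\cos 2r\,\sigma_x+\sin 2r\,\sigma_z$, $U_r\sigma_yU_r=-\sigma_y$, $U_r\sigma_zU_r=\sin 2r\,\sigma_x+\cos 2r\,\sigma_z$. Hence, writing $c=\int\nu(dr)\cos 2r$ and $s=\int\nu(dr)\sin 2r$, the map $T$ is block-diagonal in the Pauli basis: on $\mathrm{span}\{\sigma_x,\sigma_z\}$ it acts as $\left(\begin{smallmatrix}-c & s\\ s & c\end{smallmatrix}\right)$, whose singular values both equal $\sqrt{c^2+s^2}=\bigl|\int\nu(dr)e^{2\ii r}\bigr|<1$, while on $\sigma_y$ it acts as $-1$. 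Therefore the only traceless $A$ with $\Norm{T(A)}=\Norm{A}$ are the scalar multiples of $\sigma_y$.

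It remains to see that $\sigma_y$ is eliminated by the shift constraint: for $A=a\,\sigma_y$ one has $\Walk(A)(p)=-a\,S^*(p)\sigma_yS(p)$, and since $S(p)=\mathrm{diag}(e^{\ii p},e^{-\ii p})$ while $\sigma_y$ is purely off-diagonal, $S^*(p)\sigma_yS(p)$ acquires the phases $e^{\pm 2\ii p}$ and is genuinely $p$-dependent; its $p$-average vanishes, so $\Walk(A)\in\mathcal T_0^\perp$ and hence $\Walk(A)\notin\mathcal T_0$ unless $a=0$. This contradicts $\Walk(A)\in\mathcal T_0$, so no nonzero $A\perp\idty$ can satisfy $\Norm{\Walk^2(A)}=\Norm{A}$, which proves the lemma. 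The only (very mild) obstacle is bookkeeping: verifying that $c^2+s^2$ and $a^2+b^2$ are strictly below $1$ for the Gaussian measure and that the splitting and norm identities of Lemma~\ref{lem:BrokenLinks} carry over unchanged; everything of substance is the observation that the unique $T$-invariant \emph{dangerous} direction $\sigma_y$ is off-diagonal and is therefore dispersed by the diagonal shift $S$.
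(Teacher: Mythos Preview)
Your proof is correct and follows essentially the same route as the paper: decompose $\Coin$ into $T$ and conjugation by $\widetilde U$, show both are strict contractions except on $\mathrm{span}\{\idty,\sigma_y\}$ via the Pauli-basis computation $U_r\sigma_yU_r=-\sigma_y$, $U_r\sigma_{x,z}U_r\in\mathrm{span}\{\sigma_x,\sigma_z\}$, and then eliminate $\sigma_y$ because $S^*(p)\sigma_yS(p)\in\mathcal T_0^\perp$. Your argument is in fact slightly more explicit than the paper's, which invokes irreducibility of $\{U_r\}$ for $\Norm{\widetilde U}_{op}<1$ rather than your direct eigenvalue computation $\Norm{\widetilde U}_{op}=\sqrt{a^2+b^2}$, and which asserts contractivity of $T$ on $\mathrm{span}\{\sigma_x,\sigma_z\}$ without writing down the $2\times2$ matrix and its singular values $\sqrt{c^2+s^2}$.
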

\begin{proof}
Since the set $\{U_r\}$ is irreducible and all $U_r$ are hermitian we can exclude that $\widetilde U$ has eigenvalues of modulus one and it follows that $\Norm{\widetilde U}_{op}<1$. This proves the contractivity of the map corresponding to conjugation by $\widetilde U$.

Again, using Pauli operators as a basis for the set of two-dimensional matrices it is easy to see that
\[
U_r^*\sigma_x U_r = -\cos 2r \,\sigma_x + \sin 2r\, \sigma_z\quad ,\quad  U_r^*\sigma_y U_r=-\sigma_y\quad , \quad U_r^* \sigma_z U_r = \sin 2r \,\sigma_x + \cos 2r\, \sigma_z \, .
\]
Taking the expectation value of these equations with respect to the probability distribution $\nu$ shows that the diagonal part of $\Coin$, which we denote again by $T$, is strictly contractive on $\sigma_x$ and $\sigma_z$. The assertion follows from the fact that the conjugation with the shift operator $S$ maps $\sigma_y$ to $\mathcal{T}_0^\perp$.
\end{proof}

\begin{figure}[htb]
\begin{center}
\includegraphics[width=15cm]{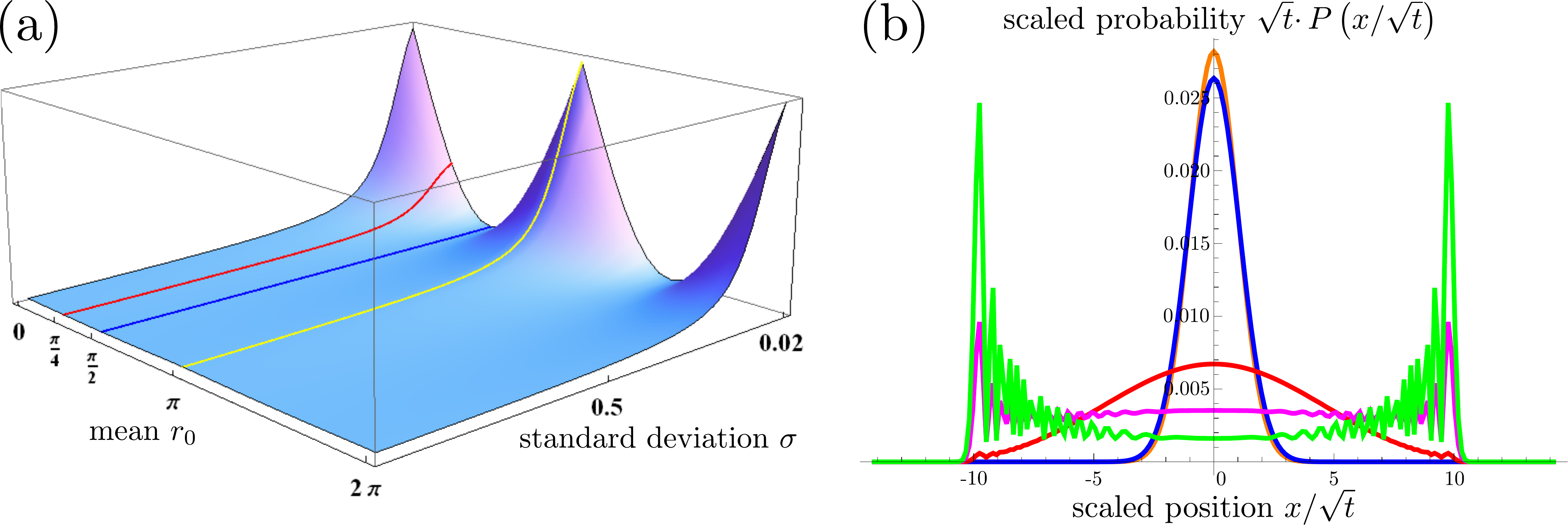}
\end{center}
\caption{Plot (a) shows the diffusion constant $D$ as a function of $\sigma\in(0,1]$ and $r_0\in[0,2\pi]$. For every fixed $\sigma$ the diffusion constant has a peak at $r_0\in\{0,\pi\}$ (yellow line) which correspond to measures peaked around $\pm\sigma_z$ coins. The minima occur at $r_0=\pi/2,3\pi/2$ (blue line) where the random coins are peaked around $\sigma_x$. The diffusion constant of a quantum walk peaked at the Hadamard coin can be found at $r_0=\pi/4$ (red line) and dephased variants of it at $r_0=k\pi/4$, $k\in\mathbbm Z$. Plot (b) shows the position probability distribution of the quantum walk with gaussian distribution peaked at the Hadamard coin $H$ for a number of $t=200$ time steps in diffusive scaling. Again, the average over two neighboring lattice sites was computed in order to remove the oscillating behavior of the probability distributions. The different plots correspond to $\sigma=2$ (orange), $\sigma=1$ (blue), $\sigma=0.2$ (red), $\sigma=0.1$ (magenta), and $\sigma=0.01$ (green). What can be seen is that for $\sigma\rightarrow0$, where the gaussian measure converges weakly to the point measure at the Hadamard coin, the diffusion constant diverges, which explains the Hadamard like position probability for $\sigma=0.01$. Nevertheless for $t\rightarrow\infty$ it becomes gaussian, as long as $\sigma>0$.
}
\label{fig:PosProbGaussianDiffConstGaussian}
\end{figure}

The shift operator satisfies $\ind \,S=0$, such that according to \eqref{Eq:FirstOrderFormula} $\mu'=0$ which means that the ballistic order vanishes. Hence, the characteristic function in diffusive scaling is well-defined and yields a Gaussian. The position probability distribution can be calculated by the inverse Fourier transform resulting again in a Gaussian. For a finite number of $t=200$ time steps the position probability distribution for a coin operator peaked at the Hadamard coin $U_{\pi/4}=H$ is depicted in figure \ref{fig:PosProbGaussianDiffConstGaussian} (b) for $\sigma\in\{0.01,0.1,0.2,1,2\}$. It is apparent that for small $\sigma$ the distribution looks like the one of the usual Hadamard coin plus some gaussian background. This is again an indication that for decreasing $\sigma$, i.e. coin distributions with decreasing width, the crossover from ballistic to diffusive behavior happens at later times. Indeed, since the gaussian measure \eqref{eq:GaussianMeasure} converges weakly to a point measure, i.e. $\nu\overset{w}{\longrightarrow}\delta_{r_0}$ for $\sigma\rightarrow0$, we expect a divergent diffusion constant as the translation invariant and unitary quantum walk with coin $U_{r_0}$ exhibits ballistic behavior.

Figure \ref{fig:PosProbGaussianDiffConstGaussian} (a) depicts the diffusion constant $D$ as a function of $\sigma\in(0,1]$ and $r_0\in[0,2\pi]$. Apparently, $D<\infty$ for $\sigma>0$ independent of the coin at which the distribution is peaked. The symmetry around $r_0=\pi$ comes from the fact that we have $U_{\pi+r}=\sigma_z U_{\pi-r}\sigma_z$. By $\Sigma_z$ we denote the operator on $\WSp$ corresponding to the Fourier transform $\sigma_z$. Since $\Sigma_z$ commutes with the shift $S$ we have the relation $S^*\Sigma_z\Coin (\Sigma_z\exp^{\ii \varepsilon\lambda\cdot Q}\Sigma_z)\Sigma_z S =\Sigma_zS^*\Coin (\exp^{\ii \varepsilon\lambda\cdot Q}) S \Sigma_z$. Denoting the walk operator with additional $\Sigma_z$ rotations by $\widetilde \Walk $, i.e. $\widetilde \Walk (A) =S^*\Sigma_z\Coin(\Sigma_z A\Sigma_z)\Sigma_z S$ we get the relation
\[
\widetilde\Walk^t(\exp^{\ii \varepsilon\lambda\cdot Q}) = \Sigma_z\Walk^t(\exp^{\ii\varepsilon\lambda\cdot Q})\Sigma_z \, .
\]
Changing the parameter from $r_0=\pi+r$ to $r_0=\pi-r$ is equivalent to changing the initial state from $\rho_0$ to $\Sigma_z\rho_0\Sigma_z$. Since the characteristic function in the asymptotic limit does not depend on the initial state (\eqref{Eq:CharFuncBall} and \eqref{Eq:CharFuncDiff}), the diffusion constant is independent of the above change of $r_0$. The periodic dependence on $r_0$ reflects the periodicity of the $U_r$. The minima on the slices with $\sigma$ fixed are the ones where the peak of the gaussian measure \eqref{eq:GaussianMeasure} is at $U_{r_0}=\sigma_x$, the reason being that the unitary and translation invariant quantum walk with coin $\sigma_x$ shows no propagation at all. The maxima correspond to $r_0\in\{0,\pi\}$ where the measure is peaked around $\sigma_z$.

\subsection{Quantum Walks in Two Dimensions}
For a quantum walk on a two-dimensional lattice, the Hilbert space is $\ell^2(\Integers^2)\otimes\KK $. That is, the position is given by two-component vectors $x=(x_1,x_2)$. We consider a four-dimensional coin space $\KK\cong \Complex^4\cong \Complex^2\otimes \Complex^2$. The shift operator conditioned on the internal degree of freedom is given by

\begin{equation}\label{Eq:2Dshiftop}
S(p)=\begin{pmatrix}e^{iv_{\uparrow}\cdot p}&0&\dots&\\0&e^{iv_{\downarrow}\cdot p}&&\\\vdots&&e^{iv_{\leftarrow}\cdot p}&\\&&&e^{iv_{\rightarrow}\cdot p}\end{pmatrix},\quad p=(p_1,p_2)\, ,
\end{equation}
where $v_{\uparrow}=(1,0)$, $v_{\downarrow}=(-1,0)$, $v_{\leftarrow}=(0,-1)$ and $v_{\rightarrow}=(0,1)$. The explicit example we are going to consider is a quantum walk with coin operation constructed from the set of unitaries $\{ U_1=H\otimes H,\: U_2=\sigma_x\otimes\idty \}$, where $U_1$ is applied with probability $(1-w)$ and $U_2$ occurs with probability $w$. In order to apply theorem \ref{thm:MainThm}, some power of the walk operator $\Walk$ must be strictly contractive on the subspace $\{\idty\otimes\idty\}^\perp$. Apparently, $\Walk$ itself is not strictly contractive on $\{\idty\otimes\idty\}^\perp$ because the $p$-independent operator $\idty\otimes H$ is an eigenvector of $\Coin$ with eigenvalue $1$. Moreover, the coins $U_1$ and $U_2$ are reducible. Denoting the eigenvectors of $H$ by $\Ket{\pm}$ it is easy to see that the two dimensional subspaces $\Complex^2\otimes\Ket{\pm}$ are invariant subspaces for $U_1$ and $U_2$. To confirm the applicability of theorem \ref{thm:MainThm} we now prove that $\Walk^2$ is strictly contractive.
\begin{lem}
Let $w\in (0,1)$ and $\Walk=S^*\Coin S$, with $S$ according to \eqref{Eq:2Dshiftop} and $\Coin$ be defined by the coins $U_1=H\otimes H$ and $U_2=\sigma_x\otimes H$ of which $U_1$ is applied with probability $1-w$ and $U_2$ is applied with probability $w$. Then $\Walk^2$ is strictly contractive on $\{\idty\otimes\idty \}^\perp$.
\end{lem}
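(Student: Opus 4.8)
The plan is to imitate the proofs of the preceding lemmas. Write the averaged coin as $\Coin(A)=T(A_0)+\widetilde U^{*}(A-A_0)\widetilde U$ with $T(A)=(1-w)\,U_1^{*}AU_1+w\,U_2^{*}AU_2$, $\widetilde U=(1-w)\,U_1+w\,U_2$ and $A_0=(2\pi)^{-2}\int d^{2}p\,A(p)$, and split $\TOps=\mathcal{T}_0\oplus\mathcal{T}_0^{\perp}$ as in \eqref{Eq:CoinSubspaces}. First I would show $\Norm{\widetilde U}_{op}<1$: since $U_1$ and $U_2$ are Hermitian unitaries, $\widetilde U$ is Hermitian, and $\Norm{\widetilde U}_{op}=1$ would force a unit vector $\psi$ with $U_1\psi=U_2\psi=\widetilde U\psi$, i.e. a common eigenvector of $H\otimes H$ and $\sigma_x\otimes\idty$; as $H$ and $\sigma_x$ have no common eigenvector, no such $\psi$ exists. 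By \eqref{Eq:NormUStarU} this yields $\Norm{\widetilde U^{*}.\widetilde U}_{op}<1$, so conjugation by $\widetilde U$ is strictly contractive on $\mathcal{T}_0^{\perp}$.

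Next, exactly as in the proof of lemma \ref{lem:BrokenLinks}, I would combine $\Norm{\Walk^{2}(A)}=\Norm{\Coin(\Walk(A))}$ with $\Norm{\Coin(B)}^{2}=\Norm{\Coin(B_0)}^{2}+\Norm{\Coin(B-B_0)}^{2}$ and the strict contractivity just obtained: the existence of $A\perp\idty\otimes\idty$ with $\Norm{\Walk^{2}(A)}=\Norm{A}$ then forces $A\in\mathcal{T}_0$, $\Norm{T(A)}=\Norm{A}$ and $\Walk(A)\in\mathcal{T}_0$. Since $(S(p)^{*}T(A)S(p))_{ij}=T(A)_{ij}\,e^{\ii(v_j-v_i)\cdot p}$ and the four shift vectors $v_\uparrow,v_\downarrow,v_\leftarrow,v_\rightarrow$ are pairwise distinct, $p$-independence of $S(p)^{*}T(A)S(p)$ forces $T(A)_{ij}=0$ for $i\neq j$, that is, $T(A)$ is diagonal in the coin basis.

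It then remains to show that these constraints leave no nonzero $A$. As in the proof of proposition \ref{prop:IrredContr}, $\Norm{T(A)}=\Norm{A}$ on $\mathcal{T}_0$ is an equality in Cauchy--Schwarz and forces $U_1^{*}AU_1=U_2^{*}AU_2$, equivalently $[A,U_1U_2^{*}]=0$ with $U_1U_2^{*}=H\sigma_x\otimes H$. This operator has four distinct eigenvalues, so $A$ must be a linear combination $A=\sum_{s,t}c_{st}\,g_sg_s^{*}\otimes h_th_t^{*}$ of its four rank-one spectral projectors, with $g_s$ the eigenvectors of $H\sigma_x$ and $h_t$ those of $H$. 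On this locus $T(A)=U_1^{*}AU_1=(H\otimes H)A(H\otimes H)$; expanding $g_sg_s^{*}$, $h_th_t^{*}$ and their conjugates by $H$ in the Pauli basis, the requirement that $T(A)$ be diagonal becomes three homogeneous linear equations in the four $c_{st}$, and $\Scp{\idty\otimes\idty}{A}=0$ a fourth one. The coefficient matrix of this $4\times4$ system is, up to row signs, a $4\times4$ Hadamard matrix, hence invertible, so all $c_{st}=0$ and $A=0$, contradicting $A\neq0$. Therefore $\Norm{\Walk^{2}(A)}<\Norm{A}$ for all $A\perp\idty\otimes\idty$.

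I expect the bookkeeping in the last paragraph to be the main obstacle: the condition $[A,U_1U_2^{*}]=0$ is transparent in the eigenbasis of $U_1U_2^{*}$, whereas ``$T(A)$ diagonal'' is a statement about the standard coin basis, so one has to carry a change of basis through the Pauli-operator expansion and then check that the resulting small linear system is non-degenerate. It is essential here that $U_1U_2^{*}$ have a multiplicity-free spectrum, since this is what keeps its commutant four-dimensional so that the four linear conditions actually pin $A$ down; this is precisely the point at which the concrete choice of the two coins enters.
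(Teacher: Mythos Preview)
Your argument is correct and complete; the bookkeeping you were worried about does go through, and your observation that the resulting $4\times4$ system is (up to row signs) the Walsh--Hadamard matrix is accurate. One small point: you silently work with $U_2=\sigma_x\otimes\idty$ (as in the surrounding text and in the paper's own proof), not with $U_2=\sigma_x\otimes H$ as the lemma literally states; this is harmless since the latter is a typo, but with the literal $U_2=\sigma_x\otimes H$ one would get $U_1U_2^{*}=H\sigma_x\otimes\idty$, whose spectrum is two-fold degenerate, and your commutant argument would no longer pin $A$ down to four parameters.

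The route differs from the paper's in how the ``rigid'' subspace of $\mathcal T_0$ is isolated. The paper observes that $T$ is a Hermitian map on $\mathcal T_0$ and directly lists its unit-modulus eigenvectors: they must be common eigenvectors (with matching eigenvalue) of conjugation by $U_1$ and by $U_2$, which yields $\idty\otimes H$, $\sigma_y\otimes\idty$, $\sigma_y\otimes H$; then it simply notes that no nonzero linear combination of these is diagonal. You instead recycle the Cauchy--Schwarz argument of proposition~\ref{prop:IrredContr} to obtain $[A,U_1U_2^{*}]=0$, use the simple spectrum of $U_1U_2^{*}=H\sigma_x\otimes H$ to parametrize the commutant by its four rank-one spectral projectors, and then solve the linear system. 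The two subspaces coincide (e.g.\ $(H\sigma_x\otimes H)^2=i\sigma_y\otimes\idty$ and $H\sigma_x\otimes H=\tfrac{1}{\sqrt2}(\idty+i\sigma_y)\otimes H$), so the endgames are equivalent. The paper's approach is a bit quicker because the Pauli tensor basis makes the eigenvectors of $T$ visible by inspection; your approach is more mechanical and transferable, since it reuses proposition~\ref{prop:IrredContr} verbatim and reduces the remaining question to the invertibility of an explicit matrix---at the price of needing $U_1U_2^{*}$ to have a multiplicity-free spectrum, which happens to hold here.
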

\begin{proof}
Although the coins $U_1$ and $U_2$ are reducible we can exclude the existence of a common eigenvector by calculating the determinant of their commutator. If there is a common eigenvector this determinant equals zero, but since we have $\det [U_1,U_2]=-1\cdot \det[H,\sigma_x]=-2$ such an eigenvector cannot exist. This already proves that the operator norm of the hermitian matrix $\widetilde U= (1-w)\,U_1+w \,U_2$ is strictly less than $1$, hence, conjugation by $\widetilde U$ is a strictly contractive map.

Again, let $T$ denote the diagonal part of the coin operator $\Coin$, i.e.,
\[
T(A_0)=(1-w)\,H\otimes H \cdot A_0\cdot H\otimes H + w\,\sigma_x\otimes\idty\cdot A_0\cdot \sigma_x\otimes\idty\, .
\]
By choosing an orthonormal operator basis for $\mathcal{T}_0$ and calculating the action of $T$ with respect to this basis one confirms that $T$ is a hermitian map. Clearly, $\Norm{T}_{op}=1$ and in order to verify the contractivity of $\Walk^2$ we have to consider the eigenvectors of $T$ which are orthogonal to $\idty\otimes\idty$ and have eigenvalues $\pm 1$. It is sufficient to prove that conjugation by $S$ maps these eigenvectors to vectors with non-zero overlap with the subspace $\mathcal{T}_0^\perp$. These eigenvectors have to be common eigenvector of the maps corresponding to conjugation with $U_1$ and $U_2$, hence, we find exactly $\idty\otimes H$, $\sigma_y\otimes \idty$ and $\sigma_y\otimes H$ as eigenvectors with eigenvalues $\pm 1$. It is easy to see that the only vectors in $\mathcal{T}_0$ which are mapped to $\mathcal{T}_0$ again are all $4$-dimensional diagonal matrices, which proves the assertion.
\end{proof}
The argument of the characteristic function is a two-dimensional vector $\lambda=(\lambda_1,\lambda_2)^T$, hence,
\begin{equation}
\Lambda=\begin{pmatrix}\lambda_1&0&\dots&\\0&-\lambda_1&&\\\vdots&&-\lambda_2&\\&&&\lambda_2\end{pmatrix}\,.
\end{equation}
Since $\sum_i v_i=0$, the ballistic order is zero by \eqref{Eq:FirstOrderFormula}, the second order correction $\mu''$ can be written as quadratic form in $\lambda$
\begin{equation}
\mu''=-\lambda^T\cdot D\cdot\lambda,
\end{equation}
with the covariance matrix $D$. By theorem \ref{thm:MainThm} we can compute the matrix elements of $D$ via \eqref{Eq:CovarianceMatrix} and \eqref{Eq:Ralphabeta}. We use again corollary \ref{cor:CovMatPowerSeries} to get a numerical approximation of $D$.

The asymptotic probability distribution can be computed via the characteristic function
\begin{equation}
C(\lambda)=\exp^{-\frac{1}{2}\lambda^T D\lambda}\,
\end{equation}
and the probability at position $x=(x_1,x_2)$ is given by its inverse Fourier transform
\begin{align}
P(x)&=\frac{1}{(2\pi)^2}\int_{\mathbbm R^2}\text{d}\lambda\:C(\lambda)e^{-i\lambda x}\\
&=\frac{1}{(2\pi)^2}\int_{\mathbbm R^2}\text{d}\lambda\:
e^{-\frac{1}{2}\lambda^T D\lambda-ix\lambda}\nonumber\\
&=\frac{1}{2\pi\sqrt{\det D}}\exp^{-\frac{1}{2}x^T D^{-1}x}\, , \nonumber
\end{align}
see e.g. \cite{negele1998quantum}, with
\begin{equation*}
D^{-1}=\frac{1}{\text{det} D}\begin{pmatrix}D_{22}&-D_{12}\\-D_{12}&D_{11}\end{pmatrix}.
\end{equation*}
The asymptotic probability distribution for our example is illustrated in figure \ref{Fig:2DAsymptotic} for two different values of $w$.
Here $U_1=H\otimes H$ renders a Hadamard-walk in two dimensions and $U_2=\sigma_x\otimes\idty$ yields propagation strictly along the diagonal. The coin $U_2$ in combination with $U_1$ leads to significantly increased spreading in the diagonal direction and reduced spreading in the anti-diagonal direction.

\begin{figure}
\centering
\includegraphics[width=12cm]{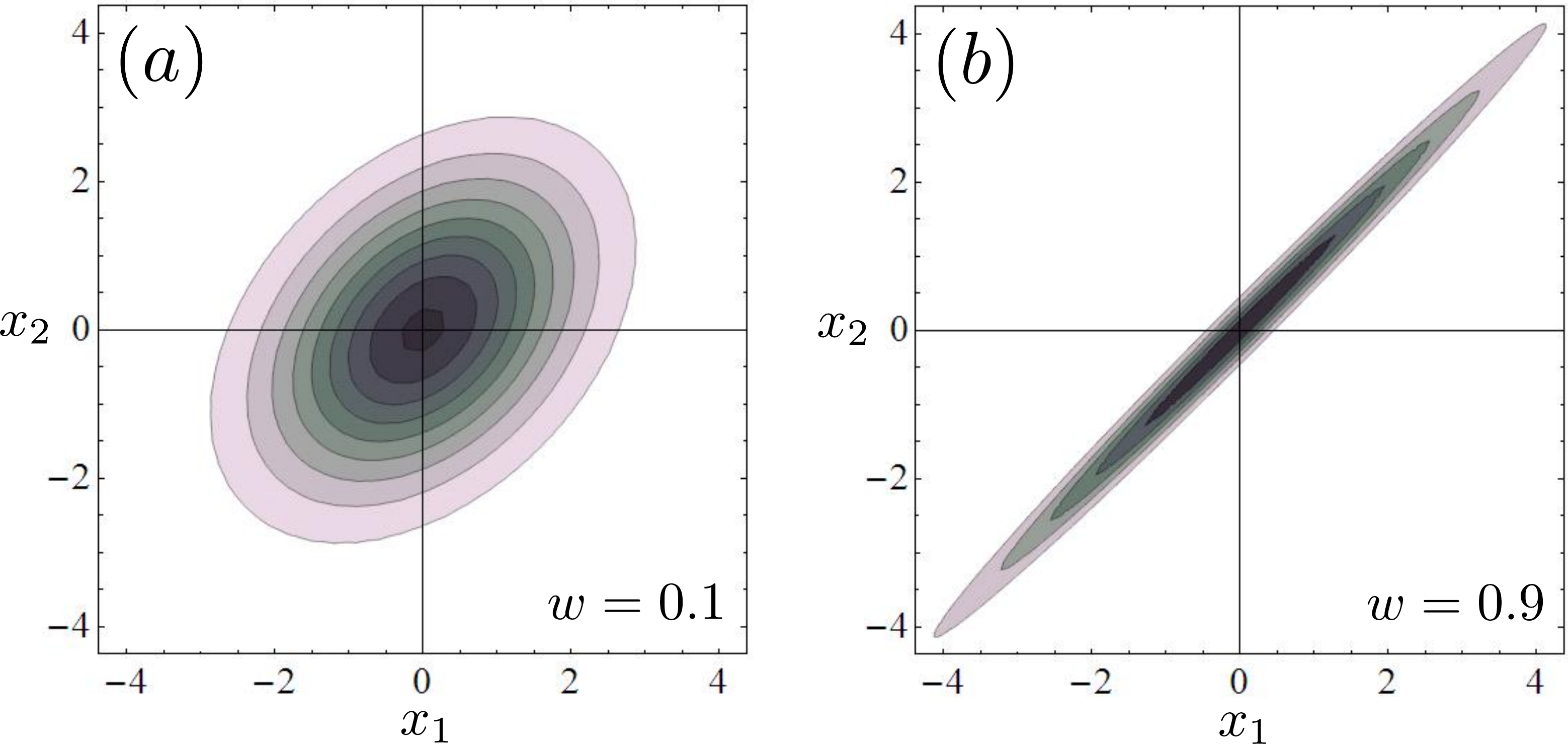}
\caption{Contour plots of the asymptotic position probability distributions of the quantum walk with coins $ U_1=H\otimes H$ and $U_2=\sigma_x\otimes\idty$ with probabilities $\{(1-w),w \}$ for (a): $w=0.1$ and (b): $w=0.9$. The coin $U_1$ alone would lead to a two-dimensional Hadamard walk, but $U_2$ incorporates propagation strictly along the diagonal.
}
 \label{Fig:2DAsymptotic}
\end{figure}

\section{Generalizations}\label{sec:generalizations}
Although the vast majority of literature considers quantum walks which are a composition of a single coin and a single shift operator we will briefly comment on more general models of quantum walks. A way to characterize quantum walks more abstractly is to define them to be discrete time evolutions on a lattice $\HH=\WSp$, which are local and translation invariant. This definition is clearly satisfied if several quantum walks $\Walk_i$ are concatenated and considered as a single time step $\Walk$ given by
\begin{equation}
\label{Eq:GenWalk}
\Walk=\Walk_n\circ \ldots \circ \Walk_1\,.
\end{equation}
More precisely, we assume that each $\Walk_i$ is a quantum walk according to \eqref{Eq:WalkPosSpace}, which means it can be written as
\[
\Walk_i (A)= S_i^* \Coin_i (A) S_i\, ,
\]
where $\Coin_i$ is a, possibly decoherent, coin operator and $S_i$ is a unitary state dependent shift operator. For this generalized model of quantum walks we have the following proposition, which also covers the extremal case where several unitary quantum walks are concatenated with one decoherent quantum walk.
\begin{prop}
Let $\Walk$ be a generalized quantum walk according to \eqref{Eq:GenWalk}. If at least one $\Walk_i$ is strictly contractive on $\{\idty\}^\perp$ we can apply perturbation theory to the eigenvector equation $\Walk_\varepsilon (A_\varepsilon)=\mu_\varepsilon A_\varepsilon$ to determine the asymptotic position distribution of $\Walk$. In particular, let $\ind\,S_i$ denote the index of $S_i$ according to \eqref{Eq:Index}, the asymptotic position distribution of $\Walk$ in ballistic scaling is given by a point measure at $(\dim\KK)^{-1}\sum_{i=1}^n  \ind\,S_i $.
\end{prop}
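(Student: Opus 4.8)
The plan is to verify the hypotheses of the Kato--Rellich theorem (Theorem~\ref{thm:KatoRellich}) for the composite operator $\Walk_\varepsilon:=\phi^{-1}\circ\Walk\circ\phi$ and then to read off its first-order eigenvalue correction. Writing $\Walk_{i,\varepsilon}:=\phi^{-1}\circ\Walk_i\circ\phi$ and inserting $\phi\circ\phi^{-1}$ between consecutive factors of $\Walk=\Walk_n\circ\cdots\circ\Walk_1$ gives $\Walk_\varepsilon=\Walk_{n,\varepsilon}\circ\cdots\circ\Walk_{1,\varepsilon}$. Two properties of each single-step walk $\Walk_i$ are used repeatedly, both of which hold exactly as established in the proof of Proposition~\ref{Prop:BasicProp}, since $\Walk_i$ is itself a quantum walk of the form \eqref{Eq:WalkPosSpace} built from a unitary shift $S_i$ and a (mixed-unitary, hence bistochastic) coin channel $\Coin_i$: first, $\Norm{\Walk_i}_{op}\le1$ on $\TOps$; second, $\idty$ is simultaneously a left and a right eigenvector of $\Walk_i$ for the eigenvalue $1$, the right property being $\Walk_i(\idty)=\idty$ and the left one $\Scp{\idty}{\Walk_i(A)}=\Scp{\idty}{A}$ for all $A$. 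In particular each $\Walk_i$ maps $\{\idty\}^\perp$ into itself.

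Boundedness and differentiability of $\Walk_\varepsilon$ then follow at once. As in Proposition~\ref{Prop:BasicProp}(ii), $\Norm{\Walk_{i,\varepsilon}}_{op}\le\max_j\vert\exp^{\ii\varepsilon\lambda\cdot v_{i,j}}\vert$, where the $v_{i,j}$ are the shift vectors of $S_i$, so $\Norm{\Walk_\varepsilon}_{op}\le\prod_i\max_j\vert\exp^{\ii\varepsilon\lambda\cdot v_{i,j}}\vert$, which equals $1$ for real $\varepsilon$ and is finite for all $\varepsilon\in\Complex$. Each $\Walk_{i,\varepsilon}$ is operator-norm analytic by Proposition~\ref{Prop:BasicProp}(i), and a finite product of operator-norm analytic maps is again analytic, so the derivative of $\Walk_\varepsilon$ exists in operator norm on a complex neighbourhood of the origin.

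The one genuinely new point is the non-degeneracy of the eigenvalue $1$ of $\Walk=\Walk_0$, and this is where the hypothesis enters. Since every $\Walk_i$ fixes $\idty$, so does $\Walk$. Suppose $\Walk(A)=A$ with $A\perp\idty$ and $A\ne0$; set $B_0=A$ and $B_j=\Walk_j(B_{j-1})$, so that $B_n=\Walk(A)=A$. By the first paragraph every $B_j$ lies in $\{\idty\}^\perp$, and since each $\Walk_j$ is a contraction,
\[
\Norm{A}=\Norm{B_n}\le\Norm{B_{n-1}}\le\cdots\le\Norm{B_0}=\Norm{A},
\]
so $\Norm{B_j}=\Norm{A}$ for every $j$. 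Let $\Walk_{i_0}$ be a step that is strictly contractive on $\{\idty\}^\perp$. If $B_{i_0-1}\ne0$, then $\Norm{\Walk_{i_0}(B_{i_0-1})}<\Norm{B_{i_0-1}}$, contradicting $\Norm{B_{i_0}}=\Norm{B_{i_0-1}}$; if $B_{i_0-1}=0$, then $B_n=A=0$, again a contradiction. Hence the eigenvalue $1$ is simple, Theorem~\ref{thm:KatoRellich} provides an analytic eigenpair $(A_\varepsilon,\mu_\varepsilon)$ with $A_\varepsilon\to\idty$ and $\mu_\varepsilon\to1$, and --- repeating the argument of Section~\ref{Sec:AsymptPosDistr} verbatim, which only uses $\Norm{\Walk_\varepsilon}_{op}\le1$ for real $\varepsilon$ --- one gets $\Walk_\varepsilon^t(\idty)\to\mu_\varepsilon^t\idty$, hence $C(\lambda)=\lim_{t\to\infty}\mu_\varepsilon^t$. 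From here the entire first- and second-order apparatus of Section~\ref{Sec:AsymptPosDistr} applies unchanged, so in particular the diffusive analysis carries over to the composite as well.

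It remains to compute $\mu'$. As in \eqref{Eq:FirstOrderFormula}, the left-eigenvector property gives $\mu'=\Scp{\idty}{\Walk'(\idty)}$, with $\Walk'$ the derivative of $\Walk_\varepsilon$ at $\varepsilon=0$. Applying the Leibniz rule to $\Walk_\varepsilon=\Walk_{n,\varepsilon}\circ\cdots\circ\Walk_{1,\varepsilon}$, using $\Walk_j(\idty)=\idty$ to collapse the factors to the right of the differentiated one and $\Walk_k'(\idty)(p)=\Walk_k(\idty)(p)\cdot\ii\Lambda_k=\ii\Lambda_k$ (where $\Walk_k'$ denotes the derivative of $\Walk_{k,\varepsilon}$ at $0$ and $\Lambda_k=\mathrm{diag}((v_{k,j}\cdot\lambda)_j)$), and the left-eigenvector property for the factors to the left, we obtain
\[
\mu'=\sum_{k=1}^{n}\Scp{\idty}{\ii\Lambda_k}=\frac{\ii}{\dim\KK}\sum_{k=1}^{n}\tr_\KK\Lambda_k=\frac{\ii}{\dim\KK}\,\lambda\cdot\sum_{k=1}^{n}\ind\,S_k,
\]
the last step using $\sum_j v_{k,j}=\ind\,S_k$ from \eqref{Eq:Index}. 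By \eqref{Eq:CharFuncBall} the characteristic function in ballistic scaling is therefore $C(\lambda)=\exp^{\mu'}=\exp^{\ii\lambda\cdot(\dim\KK)^{-1}\sum_{i=1}^{n}\ind\,S_i}$, i.e.\ a point measure at $(\dim\KK)^{-1}\sum_{i=1}^{n}\ind\,S_i$. The main obstacle is the non-degeneracy argument above; given the two elementary properties of the individual steps, everything else is a transcription of the single-step analysis.
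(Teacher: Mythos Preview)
Your proof is correct and follows essentially the same approach as the paper's own proof. The paper is terser --- it asserts in one sentence that $\Walk$ is strictly contractive on $\{\idty\}^\perp$ whenever one factor is, and writes out the first-order eigenvector equation before taking the scalar product with $\idty$ --- whereas you spell out the chain-of-norms argument for non-degeneracy and make the left-eigenvector property $\Scp{\idty}{\Walk_i(A)}=\Scp{\idty}{A}$ explicit (the paper uses this property throughout Section~\ref{Sec:AsymptPosDistr} without isolating it), but the logical structure is identical.
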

\begin{proof}
Clearly, $\Walk$ satisfies $\Walk(\idty)=\idty$ and since $\Walk$ maps $\idty$ and $\{\idty\}^{\perp}$ to orthogonal subspaces it follows that $\Walk$ is strictly contractive on $\{\idty\}^{\perp}$ if at least one of the $\Walk_i$ is strictly contractive on $\{\idty\}^{\perp}$. Hence, the non-degeneracy of the eigenvalue $1$ of $\Walk$ is assured.

The proof of the analyticity of $\Walk_\varepsilon$ is similar to the case in theorem \ref{thm:MainThm}, hence, $\Walk_\varepsilon$ satisfies the requirements of the Kato-Rellich theorem \ref{thm:KatoRellich}. This implies that the asymptotic behavior of $\Walk$ can be determined using our perturbation method.

The first order of the power series expansion of $\Walk_\varepsilon (A_\varepsilon)=\mu_\varepsilon A_\varepsilon$ reads
\[
\Walk_n'\circ \ldots \circ\Walk_1(\idty) + \ldots + \Walk_n\circ \ldots \circ \Walk_1'(\idty)  = \mu'\idty + A' - \Walk (A')
\]
and since $\Walk_i(\idty)=\idty$ and $\Walk_i'(A)(p)=\Walk_i (A)(p)\cdot \ii\Lambda_i$ this simplifies to
\[
\ii \Lambda_n+ \ii\sum_{i=1}^{n-2} \Walk_n \circ \ldots \circ \Walk_{n-i} (\Lambda_{n-i-1})(p) = \mu'\idty + A'(p) - \Walk (A')(p)\, .
\]
The scalar product of this equation with the unperturbed eigenvector $\idty$ yields again
\[
\mu'=\frac{\ii}{\dim\KK}\sum_{i=1}^n \tr (\Lambda_i)=\lambda\cdot\frac{\ii}{\dim \KK}\sum_{i=1}^n   \ind\,S_i \, .
\]
\end{proof}
The diffusive scaling of $\Walk$ can also be determined by equating coefficients of the perturbation expansion. Of course, the equations get more involved, but the general structure of the problem is the same. For example, the second order correction $\mu''$ to the eigenvalue $\mu_\varepsilon$ is again a quadratic form in $\lambda$ with constant coefficients and the asymptotic distribution in diffusive scaling is just a Gaussian independent of the initial state $\rho_0$.

To illustrate this, consider a concatenation two quantum walks $\Walk=\Walk_2\circ \Walk_1$. The second order equation reads
\[
\Walk (A'') + 2\Walk_2' \circ \Walk_1 (A')+2\Walk_2 \circ \Walk_1' (A') +\Walk_2'' (\idty) + \Walk_2\circ\Walk_1''(\idty) + 2 \Walk_2'\circ\Walk_1' (\idty)= \mu''\idty +2 \mu'A' + A''
\]
and by assuming $\mu'=0$ and taking the scalar product with $\idty$ again, we get
\[
\mu'' = \frac{1}{(2\pi)^s \dim\KK}\int dp^s \tr ( -\Walk_2(\Lambda_1^2)(p)-\Lambda_2^2 -2 \Walk_2(\Lambda_1)(p)\cdot\Lambda_2 + 2 \ii (\Walk_2(\Walk_1(A')(p)\cdot\Lambda_1)(p)+\Walk_2(\Walk_1(A'))(p)\cdot\Lambda_2) ) \, ,
\]
which is a quadratic form in $\lambda$ with no further parameter dependencies.

\section{Conclusion}\label{sec:conclude}
We have shown that quantum walks with spatio-temporal fluctuations of the local coin operator exhibit, under rather mild assumptions, diffusive behavior in the long-time limit. Our method provides complete information about the asymptotic position distribution of the considered quantum walk and, though the appearing equations may not be solvable in a simple manner, an approximation to the solution can always be found by computing a truncation of a power series expression for the exact solution.

The model of quantum walks with spatio-temporal coin fluctuations considered exhibits two generic features. First, the asymptotic position distribution in ballistic scaling is always given by a point measure. Secondly, for the asymptotic position distribution in diffusive scaling we get a Gaussian which is independent of the initial state.

One crucial assumption in our model is that the coins at different lattice sites and at different times are distributed identically and independently. Correlations of the coins in space or time have been studied in the literature and it was found that they can lead to different phenomena like ballistic or sub-ballistic behavior or even localization. However, a complete theory, combining correlations in time and space and providing sufficiently general criteria for different kinds of asymptotic behavior, is still missing.

The second assumption is of a more technical nature and concerns the irreducibility of the coin operators. This is similar to the case considered in \cite{timerandom}, where it was also shown that the extremest form of reducibility, namely commuting coin operators, leads again to ballistic behavior. Our method can, in principle, also be applied to quantum walks with spatio-temporal coin fluctuations and reducible coins by using degenerate perturbation theory.

\section{Acknowledgments}
We gratefully acknowledge support by the DFG (Forschergruppe 635) and the EU (CoQuit).

\bibliographystyle{alpha}
\bibliography{walkslit}

\end{document}